\documentclass[reqno,11pt,a4paper]{amsart}

\usepackage{amssymb}
\usepackage{url}
\usepackage{hyperref}
\usepackage{pdfpages}
\usepackage{graphicx}
\allowdisplaybreaks[4]

\usepackage{mathrsfs}
\let\mathcal\mathscr
\usepackage[mathscr]{eucal}
\usepackage[all]{xy}
\SelectTips{cm}{}

\usepackage{color}

\let\phi=\varphi
\let\kappa=\varkappa

\newcommand*{\sd}[2]{\{\,#1\mid#2\,\}}

\newcommand*{\eval}[1]{\left.#1\right|}
\newcommand*{\abs}[1]{\left|#1\right|}
\newcommand*{\Ev}{\mathbf{E}}

\theoremstyle{theorem}
\newtheorem{proposition}{Proposition}

\theoremstyle{definition}

\theoremstyle{remark}
\newtheorem{remark}{Remark}

\usepackage[all]{xy}
\SelectTips{cm}{}

\usepackage{mathrsfs}
\let\mathcal\mathscr
\usepackage[mathscr]{eucal}

\makeatletter
\newcommand*{\rom}[1]{\expandafter\@slowromancap\romannumeral #1@}
\makeatother
\newcommand{\cprime}{\/{\mathsurround=0pt$'$}}

\begin{document}

\title[The \textsc{ABC}-equation revisited, \rom{1}]{The \textsc{ABC}-equation
  revisited: missed integrable cases. Part \rom{1}: classification and
  (quasi) integrable structures}
\author{I.S.~Krasil'shchik}
\address{Trapeznikov Institute of Control Sciences, 65 Profsoyuznaya street,
  Moscow 117997, Russia}
\email{josephkra@gmail.com}
\thanks{Partially supported by the RSF Grant 25-71-20008}
\keywords{Partial differential equations, Hamiltonian and symplectic
  structures, recursion operators, B\"{a}cklund transformations}
\subjclass[2020]{37K10, 37K35}
\begin{abstract}
  We consider the \textsc{ABC}-equation
  $Au_xu_{yt} + Bu_yu_{xt} + Cu_tu_{xy} = 0$, $A$, $B$, $C\in\mathbb{R}$,
  which was proved to be integrable in the case $A + B + C = 0$ (the Veronese
  web equation). Our study reveals ten more cases that admit integrable
  structures that, to the best of our knowledge, where not known before. The
  techniques is based essentially on the theory of the tangent and cotangent
  coverings.
\end{abstract} 
\date{\today}
\maketitle
\setcounter{tocdepth}{3}
\tableofcontents

\section{Introduction}
\label{sec:introduction}

The \textsc{ABC}-equation reads
\begin{equation}
  \label{eq:1}
  Au_xu_{yt} + Bu_yu_{xt} + Cu_tu_{xy} = 0,
\end{equation}
where $A$, $B$, $C$ are nonzero real constants. The equation belongs to the
class of linearly degenerate ones. These equations admit isospectral Lax pairs
which contain differentiation with respect to the spectral parameter,
see~\cite{FM-2015}. In~\cite{KSM-2016}, an infinite series of nonlocal
conservation laws was constructed for Equation~\eqref{eq:1}.

The Lie symmetry analysis distinguishes only one special type, i.e.,
$A + B + C = 0$. In this case,~\eqref{eq:1} is called the Veronese web
equation. The latter was studied in~\cite{Zakh}, where a bi-Hamiltonian pair
was constructed. In~\cite{KMV-2019}, a recursion operator for symmetries was
found and an infinite-dimensional Lie algebra of nonlocal symmetries was
completely described.

Meanwhile, study of conservation laws (that efficiently reduces to the study
of cosymmetries) splits the case $A + B + C \neq 0$ in twelve sub-cases that
are indistinguishable by symmetry analysis. For ten of them we found
B\"{a}cklund transformations between the tangent and cotangent equations. We
call them \emph{nolocal structures} (or \emph{quasi-integrable}) and they
relate (nonlocal) symmetries and cosymmetries of the initial
equation\footnote{Integrability criteria for nonlocal structures are quite
  tricky and will be presented in the forthcoming paper~\cite{KVer-Nonloc}.}.

In Section~\ref{sec:basic-facts}, we present all necessary facts and
constructions and describe our method. For more details we refer the reader
to~\cite{Bocharov-et-al, GeomJetSp, KVV-Springer,
  KV-Trends}. Section~\ref{sec:integrable-cases} contains a description of the
above mentioned nonlocal structures.

The computational details, since they are very voluminous, are hidden in Appendices~\ref{sec:cosymmetries-case-a}--\ref{sec:cons-laws-mathc}.

\section{Basic facts}
\label{sec:basic-facts}

Necessary notation, definitions and results are exposed below.

\subsection{Equations}
\label{sec:equations}

Let $\pi\colon E = \mathbb{R}^m\times\mathbb{R}^n\to\mathbb{R}^n = M$ be a
thrivial bundle and~$\pi_\infty\colon J^\infty(\pi)\to M$ denote the bundle of
its infinite jets. Let also $\mathcal{F}(\pi)$ denote the $\mathbb{R}$-algebra
of smooth functions on~$J^\infty(\pi)$. Denote by $x^1,\dots,x^n$ and
$u^1,\dots,u^m$ coordinates in~$\mathbb{R}^n$ and~$\mathbb{R}^m$,
respectively, and by~$u_\sigma^j$ the corresponding adapted jet coordinates,
where $j = 1,\dots,n$ and~$\sigma$ is a multi-index. The vector fields
\begin{equation*}
  D_i = \frac{\partial}{\partial x^i} + \sum_{j,\sigma}u_{\sigma i}^j
  \frac{\partial}{\partial u_\sigma^j,}
\end{equation*}
on $J^\infty(\pi)$ are called the total derivatives. Linear differential
operators in total derivatives are called $\mathcal{C}$-differential
operators.

Let~$P$ be a free $\mathcal{F}(\pi)$-module of rank~$r$ and
$F = (F^1,\dots,F^r)\in P$. A differential equation in unknowns~$u^j$ is a
submanifold $\mathcal{E}\subset J^\infty(\pi)$ defined by the conditions
\begin{equation*}
  \mathcal{E} = \sd{\theta\in J^\infty(\pi)}{\eval{D_\sigma(F)}_\theta = 0,
   \quad 0\leq \abs{\sigma}\leq\infty},
\end{equation*}
where $D_\sigma$ is the composition of the total derivatives corresponding to
the multi-index~$\sigma$. The algebra of smooth functions on~$\mathcal{E}$ is
defined as $\mathcal{F}(\mathcal{E}) =
\eval{\mathcal{F}(\pi)}_{\mathcal{E}}$. Consequently, any differential
form~$\omega\in\Lambda^k(\mathcal{E})$ is the restriction of a
form~$\bar{\omega}$ on $J^\infty(\pi)$. Restrictions of the total derivatives
to~$\mathcal{E}$ span an involutive distribution~$\mathcal{C}$
on~$\mathcal{E}$ called the Cartan distribution.

Let $\mathcal{O}$ be an $\mathcal{F}$-module and $O\in\mathcal{O}$ be an
element such that the restriction $\eval{O}_{\mathcal{E}}$ vanishes. We shall
always assume that there exists a $\mathcal{C}$-differential operator
$\Delta\colon P\to \mathcal{O}$ such that $O = \Delta(F)$. We also assume that
the equations at hand are $2$-line equations in the sense of
Vinogradov~\cite{C-cpec}.

A symmetry of $\mathcal{E}$ is a $\pi_\infty$-vertical vector field~$S$ that
preserves the Cartan distribution, $[S, \mathcal{C}] \subset
\mathcal{C}$. Denote by~$\kappa$ the $\mathcal{F}(\mathcal{E})$-module of
sections of the pull-back~$\pi_\infty^*(\pi)$. Any symmetry is of the form
\begin{equation*}
  \Ev_\phi = \sum_{j,\sigma}D_\sigma(\phi^j)\frac{\partial}{\partial u_\sigma^j},
\end{equation*}
where $\phi = (\phi^1,\dots,\phi^m)$, the generating section, enjoys the
equation $\ell_{\mathcal{E}}(\phi) = 0$. Here
$\ell_{\mathcal{E}}\colon \kappa\to P$ is the restriction of the linearization
\begin{equation*}
  \ell_F =
  \begin{Vmatrix}
    \sum_\sigma\frac{\partial F^j}{\partial u_\sigma^k}D_\sigma
  \end{Vmatrix}
\end{equation*}
of~$F$ to $P$. We identify symmetries with their generating sections.

The $\mathcal{F}(\mathcal{E})$-module $\Lambda_h^k = \Lambda_h^k(\mathcal{E})$
of horizontal $k$-forms is generated by the elements
$dx^{i_1}\wedge\dots\wedge dx^{i_k}$. The horizontal de~Rham differential
$d_h\colon \Lambda_h^k\to \Lambda_h^{k+1}$ is defined by
\begin{equation*}
  d_h(fdx^{i_1}\wedge\dots\wedge dx^{i_k}) = \sum_iD_i(f)dx^i\wedge
  dx^{i_1}\wedge\dots\wedge dx^{i_k}.
\end{equation*}
An $(n-1)$-form $\omega$ is a conservation law of~$\mathcal{E}$ if
$d_h\omega = 0$. A conservation law is trivial if $\omega = d_h\rho$ for some
$\rho\in \Lambda_h^{n-2}$. Triviality of a conservation law is easily checked
using its generating section (characteristic). To define the latter, let us
introduce the notion of the adjoint operator.

Let $P$, $Q$ be $\mathcal{F}(\mathcal{E})$-modules and $\Delta\colon P\to Q$
be a $\mathcal{C}$-differential operator presented locally in the matrix form
$\Delta =
\begin{Vmatrix}
  \Delta_{ij}
\end{Vmatrix}$, where $\Delta_{ij} = \sum_\sigma a_{ij\sigma}D_\sigma$. Let
$\hat{P} = \hom_{\mathcal{F}(\mathcal{E})}(P, \Lambda_h^n)$ and similarly
for~$Q$. Then the adjoint $\Delta^*\colon\hat{Q}\to \hat{P}$ has the matrix
form $\begin{Vmatrix}
  \Delta_{ij}^*
\end{Vmatrix}^\intercal$, where ${}^\intercal$ denotes the transposition, and
\begin{equation*}
  \Delta_{ij}^* = \sum_\sigma (-1)^{\abs{\sigma}}D_\sigma\circ a_{ij}.
\end{equation*}
Relation between an operator and its adjoint is expressed by the Green formula
\begin{equation}
  \label{eq:2}
  \langle\Delta(p), \hat{q}\rangle - \langle p,\Delta^*(\hat{q})\rangle =
  d_h\omega, 
\end{equation}
where $p\in P$, $\hat{q}\in\hat{Q}$, and $\omega = \omega_\Delta(p,
\hat{q})\in \Lambda_h^{n-1}$.

Now, let $\omega$ be a conservation law and
$\bar{\omega}\in\Lambda^{n-1}(J^\infty(\pi))$ be such that $\omega =
\eval{\bar{\omega}}_{\mathcal{E}}$. Then, since $d_h\omega = 0$, one has
\begin{equation*}
  d_h\bar{\omega} = \Delta(F),\quad \Delta\colon P\to \Lambda_h^n.
\end{equation*}
Set $\psi = \psi_\omega = \Delta^*(1)\in \hat{P}$. This element is the
generating section of~$\omega$ and it equals zero if and only if~$\omega$ is
trivial.

\subsection{Differential coverings and B\"{a}cklund transformations}
\label{sec:diff-cover-backl}

Consider two equations $\mathcal{E}$ and~$\tilde{\mathcal{E}}$ and let
$\tau\colon \tilde{\mathcal{E}}\to \mathcal{E}$ be a smooth map. It is called
a morphism if $d\tau(\mathcal{C}_{\tilde{\theta}}) \subset
\mathcal{C}_{\tau(\tilde{\theta})}$, where
$\tilde{\theta}\in\tilde{\mathcal{E}}$ and $\mathcal{C}_x$ is the Cartan plane
at the point~$x$. A morphism is a (differential) covering
of~$\tilde{\mathcal{E}}$ over~$\mathcal{E}$ if it is a surjection and the
restriction $\eval{d\tau}_{\tilde{\mathcal{C}}_{\tilde{\theta}}}$ is an
isomorphism for any~$\tilde{\theta}$.

Locally, coverings are described as follows. Assume $\tau\colon W =
\mathbb{R}^N \times\mathcal{E}\to \mathcal{E}$, where $N\in \mathbb{N}\cup
\{\infty\}$. Then the tolal derivatives on~$W$ are
\begin{equation*}
  \tilde{D}_i = D_i + X_i,\qquad X_i = \sum_\alpha
  X_i^\alpha\frac{\partial}{\partial w^\alpha},
\end{equation*}
such that
\begin{equation}
  \label{eq:3}
  D_i(X_j) - D_j(X_i) + [X_i, X_j] = 0,\qquad 1\leq i<j\leq N.
\end{equation}
Here $w^j$ are coordinates in $\mathbb{R}^N$ that are called nonlocal
variables. Conditions~\eqref{eq:3} amount to the overdetermined system
\begin{equation*}
  \frac{\partial w^\alpha}{\partial x^i} = X_i^\alpha(\dots x^k\dots
  u_k^j\dots w^\beta\dots) 
\end{equation*}
whose compatibility conditions coinside with~$\mathcal{E}$.

Objects on $\tilde{\mathcal{E}}$ (symmetries, cosymmetries, conservation laws,
etc.) are nonlocal for~$\mathcal{E}$. Any $\mathcal{C}$-differential
operator~$\Delta$ on~$\mathcal{E}$ can be naturally lifted to an
operator~$\tilde{\Delta}$ on~$\tilde{\mathcal{E}}$. Thus, one may consider the
operators~$\tilde{\ell}_{\mathcal{E}}$
and~$\tilde{\ell}_{\mathcal{E}}^*$. Solutions of the equations
\begin{equation*}
  \tilde{\ell}_{\mathcal{E}}(\phi) = 0,\qquad
  \tilde{\ell}_{\mathcal{E}}^*(\psi) = 0
\end{equation*}
are called shadows of symmetries and cosymmetries, respectively. A
B\"{a}cklund transformation between equations~$\mathcal{E}_1$
and~$\mathcal{E}_2$ is the diagram
\begin{equation*}
  \xymatrix{
    &\mathcal{E}\ar[dl]_{\tau_1}\ar[dr]^{\tau_2}&\\
    \mathcal{E}_1&&\mathcal{E}_2\rlap{,}
  }
\end{equation*}
where $\tau_1$, and $\tau_2$ are coverings.

We say that $\omega$ is a two-component conservation law if it is of the form
\begin{equation*}
  \omega = (X_i\,dx^i +
  X_k\,dx^k)\wedge\,dx^{i_1}\wedge\dots\wedge\,dx^{i_{k-2}},\ i\neq k\neq
  i_{\alpha},\ i_1<\dots<i_{k-2},
\end{equation*}
enjoying $D_i(X_k) = D_k(X_i)$. To any such a law there corresponds the
covering~$\tau_\omega$ with the nonlocal variables~$w_\sigma$, where the
multi-index consists of~$i_1,\dots,i_{k-2}$. the defining equations
for~$\tau_\omega$ are
\begin{equation*}
  \frac{\partial w_\sigma^\alpha}{\partial x^s} = D_\sigma(X_s^\alpha),\quad s =
  i, k,\qquad
  \frac{\partial w_\sigma^\alpha}{\partial x^s} = w_{s\,\sigma}^\alpha,\quad
  s\neq i, k.
\end{equation*}

\subsection{The tangent and cotangent coverings}
\label{sec:tang-cotang-cover}

Consider an equation $\mathcal{E} = \{F = 0\}$, $F\in P$. The systems
\begin{equation*}
  \mathcal{T}\mathcal{E}\colon
  \left\{
    \begin{array}{l}
      \ell_{\mathcal{E}}(q) = 0,\\[3pt]
      F = 0;
    \end{array}
  \right.\qquad
  \mathcal{T}^*\mathcal{E}\colon
  \left\{
    \begin{array}{l}
      \ell_{\mathcal{E}}^*(p) = 0,\\[3pt]
      F = 0     
    \end{array}
  \right.
\end{equation*}
are called the tangent and cotangent equations, respectively. The projections
$\mathbf{t}\colon (x, u, q)\mapsto (x, u)$ and
$\mathbf{t^*}\colon (x, u, p)\mapsto (x, u)$ are tangent and cotangent
coverings of the equation~$\mathcal{E}$. It is important to stress that the
variables~$q$ and~$p$ are \emph{odd} of parity~$1$.

The most important properties of these coverings are formulated in the
following two statements.

\begin{proposition}
  \label{sec:tang-cotang-cover-1}
  Let $\mathcal{E}$ be an equation. Its tangent cover possesses the following
  properties
  \begin{enumerate}
  \item \label{item:1} sections of~$\mathbf{t}$ that are morphisms of
    equations are in one-to-one correspondence with symmetries
    of~$\mathcal{E}$\textup{;}
  \item \label{item:2} $\mathbf{t}$-shadows of symmetries linear in~$q_\sigma$
    are identified with $\mathcal{C}$-differential operators
    $\mathcal{R}\colon \kappa\to \kappa$ that take symmetries
    of~$\mathcal{E}$ to symmetries\textup{,} i.e.\textup{,} with recursion
    operators for symmetries\textup{;}
  \item \label{item:3} $\mathbf{t}$-shadows of cosymmetries linear
    in~$q_\sigma$ are identified with $\mathcal{C}$-differen\-tial operators
    $\mathcal{S}\colon \kappa\to \hat{P}$ that take symmetries
    of~$\mathcal{E}$ to cosymmetries\textup{,} i.e.\textup{,} with
    pre-symplectic structures\textup{;}
  \item \label{item:4} $\mathcal{T}\mathcal{E}$ is endowed with a natural
    nilpotent vector field
    $\mathbf{X} = \sum_\sigma q_\sigma\partial/\partial u_\sigma$ of
    parity~$1$.
  \end{enumerate}
  Similarly\textup{,} the cotangent covering enjoys the properties
  \begin{enumerate}
  \item \label{item:5} sections of~$\mathbf{t^*}$ that are morphisms of
    equations are in one-to-one correspondence with cosymmetries
    of~$\mathcal{E}$\textup{;}
  \item \label{item:6} $\mathbf{t^*}$-shadows of symmetries linear
    in~$p_\sigma$ are identified with $\mathcal{C}$-differen\-tial operators
    $\mathcal{H}\colon \hat{P}\to \kappa $ that take cosymmetries
    of~$\mathcal{E}$ to symmetries\textup{,} i.e.\textup{,} with pre-Poisson
    structures\textup{;}
  \item \label{item:7} $\mathbf{t^*}$-shadows of cosymmetries linear
    in~$p_\sigma$ are identified with $\mathcal{C}$-differen\-tial operators
    $\mathcal{R}\colon \hat{P}\to \hat{P}$ that take cosymmetries
    of~$\mathcal{E}$ to cosymmetries\textup{,} i.e.\textup{,} with recursion
    operators for cosymmetries\textup{;}
  \item \label{item:8} for any~$\mathcal{E}$\textup{,} its cotangent equation
    is always an Euler-Lagrange one and consequently carries a natural
    symplectic structure.
  \end{enumerate}
\end{proposition}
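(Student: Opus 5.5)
All eight items follow by writing the defining equations of $\mathcal{T}\mathcal{E}$ and $\mathcal{T}^*\mathcal{E}$ in coordinates and reading the claimed objects off them. Throughout, the total derivatives on $\mathcal{T}\mathcal{E}$ are $\tilde D_i = D_i + \sum_\sigma q_{\sigma i}\,\partial/\partial q_\sigma$, where $q_\sigma$ stands for $D_\sigma(q)$ modulo the equation $\ell_{\mathcal{E}}(q)=0$. The cotangent case is analogous.

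For item~\eqref{item:1}, I describe a section $s$ of $\mathbf{t}$ by $q_\sigma = \phi_\sigma(x,u,\dots)$. Then $s$ is a morphism exactly when $s$ commutes with the total derivatives, i.e., $\phi_{\sigma i} = D_i(\phi_\sigma)$. This forces $\phi_\sigma = D_\sigma(\phi)$ with $\phi = \phi_\emptyset$. The section also lands in $\mathcal{T}\mathcal{E}$ exactly when $\ell_{\mathcal{E}}(\phi) = 0$, which is the defining equation of a symmetry. Item~\eqref{item:5} is identical with $\ell^*_{\mathcal{E}}$ in place of $\ell_{\mathcal{E}}$.

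For items~\eqref{item:2}, \eqref{item:3}, \eqref{item:6} and~\eqref{item:7}, I write a linear shadow as $\Phi = \sum_\sigma a_\sigma q_\sigma$ and set $\mathcal{R} = \sum a_\sigma D_\sigma$. The key identity is that the lift $\tilde\ell_{\mathcal{E}}$ applied to $\Phi$, restricted to $\mathcal{T}\mathcal{E}$, equals the composition $\ell_{\mathcal{E}}\circ\mathcal{R}$ applied to $q$, restricted to the equation $\ell_{\mathcal{E}}(q)=0$. Consequently $\Phi$ is a shadow iff $\ell_{\mathcal{E}}\circ\mathcal{R} = \Delta\circ\ell_{\mathcal{E}}$ for some $\mathcal{C}$-differential operator $\Delta$. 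This uses the assumed two-line property together with the representation $O = \Delta(F)$ of operators vanishing on the equation. That relation is precisely the statement that $\mathcal{R}$ maps $\ker\ell_{\mathcal{E}}$ to itself, and conversely, modulo operators of the form $\nabla\circ\ell_{\mathcal{E}}$, which are trivial. The same argument works in the other cases:
\begin{itemize}
\item replacing $\ell_{\mathcal{E}}$ by $\ell^*_{\mathcal{E}}$ on the target gives item~\eqref{item:3};
\item replacing it by $\ell^*_{\mathcal{E}}$ on the source gives item~\eqref{item:6};
\item replacing it by $\ell^*_{\mathcal{E}}$ on both sides gives item~\eqref{item:7}.
\end{itemize}

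\textbf{Main obstacle.} The identification only holds modulo operators vanishing on the equation. The converse direction, that every operator $\mathcal{R}$ preserving solutions gives a shadow, also needs the two-line hypothesis, which guarantees that $\mathcal{R}$ factors in the required way. I expect this step to require the most care.

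For item~\eqref{item:4}, $\mathbf{X}$ is an odd derivation with $\mathbf{X}^2 = \tfrac12[\mathbf{X},\mathbf{X}]$. Since $q$ is odd, $q_\sigma q_\tau$ is antisymmetric, and $\mathbf{X}(q_\sigma)=0$; hence $\mathbf{X}^2=0$. It remains to check that $\mathbf{X}$ commutes with $\tilde D_i$ and preserves the ideal generated by $D_\sigma(F)$ and $D_\sigma(\ell_{\mathcal{E}}(q))$. This holds because $\mathbf{X}(F) = \ell_F(q)$, and $\mathbf{X}(\ell_F(q))$ vanishes by the same antisymmetry.

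For item~\eqref{item:8}, I take the Lagrangian $L = \langle p, F\rangle\,dx$. Its variation in $p$ gives $F = 0$. Its variation in $u$ gives $\ell_F^*(p) = 0$, by the Green formula~\eqref{eq:2}. So $\mathcal{T}^*\mathcal{E}$ is the Euler--Lagrange equation of $L$, and the canonical symplectic structure of a Lagrangian system, namely $\ell_{\mathrm{EL}}$ being self-adjoint, applies.
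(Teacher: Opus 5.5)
The paper states this proposition without proof, quoting it from the jet-space literature, and your plan follows the standard argument found there:
\begin{itemize}
\item sections commuting with total derivatives, for the two statements about sections;
\item the identity $\tilde{\ell}_{\mathcal{E}}(\mathcal{R}(q)) = (\ell_{\mathcal{E}}\circ\mathcal{R})(q)$ and its three analogues, for the four statements about linear shadows;
\item the odd field $\mathbf{X}$;
\item the Lagrangian $\langle p,F\rangle$, for the cotangent equation.
\end{itemize}
This is correct in substance. For $\mathbf{X}$, you assert but do not verify that it commutes with $\tilde{D}_i$. This is automatic: $\mathbf{X}$ is the evolutionary field with generating section $(q,0)$ on the jets of the pair $(u,q)$. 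So only tangency to $\mathcal{T}\mathcal{E}$ needs checking, and your two computations do exactly that.

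What you should fix is the ``main obstacle'' paragraph, which misdiagnoses the situation.

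\emph{Forward direction.} The factorization you need is that a function linear in $q$ vanishing on $\mathcal{T}\mathcal{E}$ equals $\Delta(\ell_{\mathcal{E}}(q))$. It comes from the regularity assumption $O=\Delta(F)$ of Subsection~\ref{sec:equations}, applied to the tangent (resp.\ cotangent) system. You then extract the part of $q$-degree one, so that $\Delta$ has coefficients in $\mathcal{F}(\mathcal{E})$. The two-line property plays no role here. In the paper it is what lets generating sections detect nontrivial conservation laws, which matters for Proposition~\ref{sec:tang-cotang-cover-2}.

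\emph{Converse.} There is nothing left to prove. In this theory ``takes symmetries to symmetries'' \emph{means} $\ell_{\mathcal{E}}\circ\mathcal{R} = \Delta\circ\ell_{\mathcal{E}}$ (similarly $\ell^*_{\mathcal{E}}\circ\mathcal{S} = \Delta\circ\ell_{\mathcal{E}}$, and so on). Your key identity read backwards then shows that $\mathcal{R}(q)$ is a shadow. Under the literal reading, namely $\ell_{\mathcal{E}}(\mathcal{R}(\phi)) = 0$ for every local $\phi\in\kappa$ with $\ell_{\mathcal{E}}(\phi) = 0$, the converse is false, and no hypothesis of two-line type can repair it. For an equation without nonzero symmetries every $\mathcal{C}$-differential operator would qualify, while the operator relation fails for generic $\mathcal{R}$.

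So replace ``precisely the statement that $\mathcal{R}$ maps $\ker\ell_{\mathcal{E}}$ to itself'' as follows. The relation is the definition of a recursion operator (respectively, of a pre-symplectic or pre-Poisson structure, or of a recursion operator for cosymmetries). It implies, but is not implied by, preservation of the local (co)symmetries.
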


\begin{proposition}
  \label{sec:tang-cotang-cover-2}
  There is a one-to-one correspondence between conservation laws
  of~$\mathcal{T}\mathcal{E}$ linear in~$q_\sigma$ and cosymmetries
  of~$\mathcal{E}$. In a similar way\textup{,} There is a one-to-one
  correspondence between conservation laws of~$\mathcal{T}^*\mathcal{E}$
  linear in~$p_\sigma$ and symmetries of~$\mathcal{E}$.
\end{proposition}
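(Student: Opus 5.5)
The plan is to use the generating-section description of conservation laws from Section~\ref{sec:equations}, applied to the tangent (resp.\ cotangent) equation regarded as a system in the unknowns $(u,q)$ (resp.\ $(u,p)$). I treat the tangent case in detail; the cotangent case follows by the same argument with $\ell_{\mathcal{E}}$ and $\ell_{\mathcal{E}}^*$ interchanged, using $(\ell_{\mathcal{E}}^*)^* = \ell_{\mathcal{E}}$.

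\textbf{From cosymmetries to conservation laws.} Let $\psi\in\hat{P}$ satisfy $\ell_{\mathcal{E}}^*(\psi)=0$. By the Green formula~\eqref{eq:2} with $\Delta=\ell_{\mathcal{E}}$, $p=q$, $\hat{q}=\psi$,
\begin{equation*}
  \langle \ell_{\mathcal{E}}(q),\psi\rangle - \langle q,\ell_{\mathcal{E}}^*(\psi)\rangle = d_h\omega_{\ell_{\mathcal{E}}}(q,\psi).
\end{equation*}
On $\mathcal{T}\mathcal{E}$ the left-hand side vanishes, since $\ell_{\mathcal{E}}(q)=0$ there and $\ell_{\mathcal{E}}^*(\psi)=0$. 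Hence $\omega_\psi=\omega_{\ell_{\mathcal{E}}}(q,\psi)$ is a conservation law of $\mathcal{T}\mathcal{E}$, and it is linear in the $q_\sigma$ because the Green form is bilinear.

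\textbf{Computing its generating section.} Before the restriction to $\mathcal{T}\mathcal{E}$, the relation above reads $d_h\bar\omega = \langle \ell_F(q),\psi\rangle + (\text{terms vanishing on } \mathcal{E})$. The system $\mathcal{T}\mathcal{E}$ has two components, $(\ell_F(q), F)$, so its generating section has a $q$-component and a $u$-component. The $q$-component is $\Delta^*(1)=\psi$ for $\Delta=\langle\cdot,\psi\rangle$. The $u$-component comes from the terms that vanish on $\mathcal{E}$, which involve $\ell_F^*$-type contributions multiplied by $q$. This component is linear in $q$ and is determined by $\psi$. Thus the generating section of $\omega_\psi$ is $(\psi,\ast)$.

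\textbf{Bijectivity.} Conversely, let $\omega$ be a conservation law of $\mathcal{T}\mathcal{E}$ linear in the $q_\sigma$, with generating section $(\psi_q,\psi_u)$. Then $\psi_q$ does not depend on $q$. The generating section is a cosymmetry of $\mathcal{T}\mathcal{E}$, i.e.\ $\ell^*_{\mathcal{T}\mathcal{E}}(\psi_q,\psi_u)=0$. The operator $\ell_{\mathcal{T}\mathcal{E}}$ is block triangular,
\begin{equation*}
  \ell_{\mathcal{T}\mathcal{E}} = \begin{Vmatrix} \ell_{\mathcal{E}} & \ast \\ 0 & \ell_{\mathcal{E}} \end{Vmatrix},
\end{equation*}
where the rows correspond to the equations $(\ell_F(q),F)$, the columns to the unknowns $(q,u)$, and the entry $\ast$ is the derivative of $\ell_F(q)$ with respect to $u$. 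Its adjoint is therefore block triangular as well, with diagonal blocks $\ell_{\mathcal{E}}^*$. The component of the cosymmetry equation along the $q$-column is $\ell_{\mathcal{E}}^*(\psi_q)=0$, so $\psi_q$ is a cosymmetry of $\mathcal{E}$. The map $\omega\mapsto\psi_q$ is then inverse to $\psi\mapsto\omega_\psi$ modulo trivial laws. Injectivity follows from the fact that a conservation law is trivial if and only if its generating section vanishes. Once $\psi_q=0$, the linear-in-$q$ part of the $u$-component is forced to vanish, because it is obtained from $\psi_q$ by a $\mathcal{C}$-differential operator.

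\textbf{Main obstacle.} I expect the hardest step to be this last one: showing that the $u$-component of the generating section is uniquely determined by $\psi_q$, and handling the grading caused by the odd parity of $q$. I would first use the $2$-line assumption, which identifies the conservation laws with the kernel of the appropriate adjoint linearization in the relevant degree. Then I would decompose everything by polynomial degree in $q$ and check that the degree-one part of the cosymmetry equation for $\mathcal{T}\mathcal{E}$ determines $\psi_u$ from $\psi_q$. The cotangent statement, matching laws of $\mathcal{T}^*\mathcal{E}$ linear in $p_\sigma$ with symmetries $\phi$ via $\omega_{\ell_{\mathcal{E}}^*}(p,\phi)$, is proved identically.
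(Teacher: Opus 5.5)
Your first step, producing $\omega_\psi$ (resp.\ $\omega_\phi$) from the Green formula and noting that it is linear in $q$ (resp.\ $p$), is exactly the paper's proof. The paper's proof consists of nothing more than this construction and does not argue bijectivity.

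The bijectivity argument you add has a genuine gap. You assert that when $\psi_q=0$ the second component $\psi_u$ vanishes ``because it is obtained from $\psi_q$ by a $\mathcal{C}$-differential operator''. You established this only for the laws $\omega_\psi$ themselves, where $\psi_u$ is, up to sign, $\nabla^*(q)$ with $\ell_F^*(\bar\psi)=\nabla(F)$. You did not establish it for an arbitrary conservation law of $\mathcal{T}\mathcal{E}$ linear in $q$, so using it to prove injectivity of $\omega\mapsto\psi_q$ is circular.

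The remedy you sketch, reading off $\psi_u$ from the degree-one part of the cosymmetry equation of $\mathcal{T}\mathcal{E}$, cannot work. For $\psi_q=0$ the $u$-row of $\ell^*_{\mathcal{T}\mathcal{E}}$ reduces to $\ell_{\mathcal{E}}^*(\psi_u)=0$. Its solutions linear in $q$ are precisely the pre-symplectic operators of Proposition~\ref{sec:tang-cotang-cover-1}(\ref{item:3}), and these need not vanish:
\begin{itemize}
\item For $A=B=C$ the equation is Euler--Lagrange, and in the normalization~\eqref{eq:6} one has $\ell_{\mathcal{E}}^*\circ u_t=u_t\,\ell_{\mathcal{E}}$ on $\mathcal{E}$. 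So $(0,u_tq)$ is a nonzero cosymmetry of $\mathcal{T}\mathcal{E}$ linear in $q$.
\item On the cotangent side this happens for every $\mathcal{E}$: $\ell^*_{\mathcal{T}^*\mathcal{E}}(0,p)=(0,\ell^*_{\mathcal{E}}(p))$ vanishes on $\mathcal{T}^*\mathcal{E}$. Its second component is the identity recursion operator of Proposition~\ref{sec:tang-cotang-cover-1}(\ref{item:7}).
\end{itemize}
Being a cosymmetry of $\mathcal{T}\mathcal{E}$ therefore does not tie $\psi_u$ to $\psi_q$. Such elements are excluded only by the extra condition that singles out generating sections of conservation laws among all cosymmetries, and your argument never uses it.

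You can close the gap without generating sections of $\mathcal{T}\mathcal{E}$:
\begin{enumerate}
\item Write a law linear in $q$ as $\omega=\Omega(q)$, where $\Omega\colon\kappa\to\Lambda_h^{n-1}(\mathcal{E})$ is a $\mathcal{C}$-differential operator.
\item Since $d_h\omega=0$ on $\mathcal{T}\mathcal{E}$ and $\ell_{\mathcal{E}}(q)=0$ is resolved with respect to $q_{xy}$, one gets $d_h\circ\Omega=\nabla\circ\ell_{\mathcal{E}}$ for some $\nabla\colon P\to\Lambda_h^n(\mathcal{E})$.
\item Put $\psi=\nabla^*(1)$. Then $0=(d_h\circ\Omega)^*(1)=\ell_{\mathcal{E}}^*(\psi)$, so $\psi$ is a cosymmetry.
\item Let $\rho$ be the Green form of $\nabla$. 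The Green formulas for $\nabla$ and $\ell_{\mathcal{E}}$ show that $\Omega(q)-\omega_\psi-\rho(\ell_{\mathcal{E}}(q))$ is $d_h$-closed for unconstrained $q$.
\item The complex of $\mathcal{C}$-differential operators $\kappa\to\Lambda_h^\bullet(\mathcal{E})$, with differential $\Omega\mapsto d_h\circ\Omega$, is exact in degrees below $n$. Hence this form is $d_h$-exact, and restricting to $\mathcal{T}\mathcal{E}$ gives $\omega\sim\omega_\psi$.
\end{enumerate}
Injectivity of $\psi\mapsto\omega_\psi$ then follows from your correct observation that the $q$-component of the generating section of $\omega_\psi$ equals $\psi$. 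The cotangent case is identical, with $\ell_{\mathcal{E}}^*$ in place of $\ell_{\mathcal{E}}$.
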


\begin{proof}
  The proof is essential for us, because it provides an explicit method to
  construct conservation laws of the tangent and cotangent equations that is
  used in the forthcoming computations.

  Let $\phi$ be a symmetry of~$\mathcal{E}$ and~$\bar{\phi}$ be its extension
  to the space of the ambient jets. Then, by the Green formula~\eqref{eq:2},
  \begin{equation}\label{eq:5}
    \langle\ell_F(\bar{\phi}), p\rangle - \langle\bar{\phi},
    \ell_F^*(p)\rangle = d_h\omega,
  \end{equation}
  and we put $\omega_\phi = \eval{\omega}_{\mathcal{T}^*\mathcal{E}}$. Quite
  similarly, let~$\psi$ be a cosymmetry and~$\bar{\psi}$ be its
  extension. Then
  \begin{equation}\label{eq:4}
    \langle q, \ell_F^*(\psi)\rangle - \langle\ell_F(q), \psi\rangle =
    d_h\omega 
  \end{equation}
  and $\omega_\psi = \eval{\omega}_{\mathcal{T}\mathcal{E}}$.
\end{proof}

\subsection{Nonlocal structures}
\label{sec:nonlocal-structures}

Consider the following three B\"{a}cklund transformations
\begin{equation*}
  \xymatrix{
    &W\ar[dl]\ar_{\tau_1}[dl]\ar[dr]^{\tau_2}&
    &W^\bullet\ar_{\tau_3}[dl]\ar[dr]^{\tau_4}&
    &W^{\bullet\bullet}\ar_{\tau_5}[dl]\ar[dr]^{\tau_6}&\\
    \mathcal{T}\mathcal{E}&
    &\mathcal{T}\mathcal{E}&
    &\mathcal{T}^*\mathcal{E}&
    &\mathcal{T}^*\mathcal{E}\rlap{.}
  }
\end{equation*}
Then, by Proposition~\ref{sec:tang-cotang-cover-1}, the left and the right
diagrams may be considered as nonlocal recursion operators for symmetries and
cosymmetries, respectively, while the middle (a quasi-integrable structure)
one relates (shadows of) symmetries to (shadows of) cosymmetries
of~$\mathcal{E}$ and thus combines some properties of symplectic and Poisson
(Hamiltonian) structures.

Let us explicate our way to construct the above diagrams.

As the first step, we compute symmetries and cosymmetries
of~$\mathcal{E}$. Then, using formulas~\eqref{eq:5} and~\eqref{eq:4} we
construct conservation laws of~$\mathcal{T}\mathcal{E}$
and~$\mathcal{T}^*\mathcal{E}$. For the two-component ones, by the
correspondence described in the end of Subsection~\ref{sec:diff-cover-backl},
coverings over tangent and cotangent equations arise. Let them be
$\tau\colon W\to \mathcal{T}\mathcal{E}$ and
$\tau^*\colon W^*\to \mathcal{T}^*\mathcal{E}$.

Assume that there exist nontrivial solutions to the equations
$\tilde{\ell}_{\mathcal{E}}(\phi) = 0$ and
$\tilde{\ell}_{\mathcal{E}}^*(\psi) = 0$ in~$\tau$. They provide the
B\"{a}cklund transformations
\begin{equation*}
  \xymatrix{
    &W\ar[dl]_\tau\ar[dr]^{\tau_\phi}\\
    \mathcal{T}\mathcal{E}&&\mathcal{T}\mathcal{E}\rlap{,}
  }\qquad
  \xymatrix{
    &W\ar[dl]_\tau\ar[dr]^{\tau_\psi}&\\
    \mathcal{T}\mathcal{E}&&\mathcal{T}^*\mathcal{E}\rlap{.}
  }
\end{equation*}
In a similar way nontrivial solutions to
$\tilde{\ell}_{\mathcal{E}}(\phi) = 0$ and
$\tilde{\ell}_{\mathcal{E}}^*(\psi) = 0$ in~$\tau^*$ provide
\begin{equation*}
  \xymatrix{
    &W^*\ar[dl]_{\tau^*}\ar[dr]^{\tau_\phi}\\
    \mathcal{T}\mathcal{E}&&\mathcal{T}\mathcal{E}\rlap{,}
  }\qquad
  \xymatrix{
    &W^*\ar[dl]_{\tau^*}\ar[dr]^{\tau_\psi}&\\
    \mathcal{T}\mathcal{E}&&\mathcal{T}^*\mathcal{E}\rlap{.}
  }
\end{equation*}

Apply this scheme to Equation~\eqref{eq:1} now.

\section{(Quasi) Integrable cases}
\label{sec:integrable-cases}

Rewrite Equation~\eqref{eq:1} in the form
\begin{equation}
  \label{eq:6}
  u_{xy} = \frac{au_xu_{yt} + bu_yu_{xt}}{u_t},
\end{equation}
where $a = A/C$, $b = -B/C$. The linearization operator is
\begin{align*}
  \ell_{\mathcal{E}} &= D_xD_y- \frac{bu_y}{u_t}D_xD_t- \frac{au_x}{u_t}D_yD_t\\
  &- \frac{au_{yt}}{u_t}D_x - \frac{bu_{xt}}{u_t}D_y + \frac{au_xu_{yt} +
    bu_yu_{xt}}{u_t^2}D_t,
\end{align*}
while the adjoint one is of the form
\begin{align*}
  \ell_{\mathcal{E}}^*
  &= D_xD_y - \frac{bu_y}{u_t}D_xD_t - \frac{au_x}{u_t}D_yD_t\\
  &+ \frac{(a - b )u_tu_{yt} + bu_yu_{tt}}{u_t^2}D_x
    - \frac{(a - b)u_t u_{xt} - au_xu_{tt}}{u_t^2}D_y\\
  &- \frac{(a + b)(au_xu_{yt} + bu_yu_{xt})}{u_t^2}D_t
    - \frac{(a + b) (u_tu_{xt}u_{yt} - au_xu_{tt}u_{yt} -
    bu_yu_{tt}u_{xt})}{u_t^3}. 
\end{align*}
Consequently, the tangent and the cotangent equations are
\begin{align*}
  q_{xy} &= \frac{bu_yq_{xt}}{u_t} + \frac{au_xq_{yt}}{u_t}
  + \frac{au_{yt}q_x}{u_t} + \frac{bu_{xt}q_y}{u_t}
  - \frac{(au_xu_{yt} + bu_yu_{xt}) q_t}{u_t^2}
\intertext{and}
  p_{xy}
  &= \frac{bu_yp_{xt}}{u_t} + \frac{au_xp_{yt}}{u_t}
  - \frac{((a - b)u_tu_{yt} + bu_yu_{tt})p_x}{u_t^2}\\
  &+ \frac{((a - b)u_tu_{xt} - au_xu_{tt})p_y}{u_t^2}
    + \frac{(a + b)(au_xu_{yt} + bu_yu_{xt})p_t}{u_t^2}\\
  &- \frac{(a + b)((au_xu_{yt} + bu_yu_{xt})u_{tt} - u_tu_{xt}u_{yt})p}{u_t^3}.
\end{align*}
respectively.

Computation of symmetries splits Equation~\eqref{eq:6} in two classes:
\begin{enumerate}
\item\label{item:9} $a + b = 1$ ($A + B + C =0$). In this case, the symmetry
  algebra is generated by the elements
  \begin{equation*}
    \phi_0 = f_0,\quad \phi_1 = f_1u_x,\quad \phi_2 = f_2u_y,\quad \phi_3 =
    f_3u_t, 
  \end{equation*}
  where $f_0 = f_0(u)$, $f_1 = f_1(x)$, $f_2 = f_2(y)$, $f_3 = f_3(t)$ are
  smooth functions of their arguments.
\item\label{item:10} $a + b \neq 1$. Here, the generators of the symmetry
  algebra are, $f_1 = f_1(x)$, $f_2 = f_2(y)$, $f_3 = f_3(t)$ together with
  \begin{equation*}
    \phi_{00} = 1,\qquad \phi_{01} = u.
  \end{equation*}
\end{enumerate}

On the other hand, computing cosymmetries of~\eqref{eq:6}, one finds out that
Case~(\ref{item:10}) is divided in~$13$ subclasses that contain~$11$ equation
admitting quasi-integrable structures. They include
\begin{enumerate}
\item $A + B + C = 0$,
\item $A = -C$, $B = -C$,
\item $A = -C$, $B = C$,
\item $A = -C$, $2B = -3C$,
\item $A = -C$, $B = -3C$,
\item $A = -C$, $2B = 3C$,
\item $A = -C$, $B = 3C$,
\item $3A = 2C$, $3B = -2C$,
\item $3A = C$, $3B = -C$,
\item $A = C$, $2B = C$,
\item $A = -2C$, $B = -2C$.
\end{enumerate}
In Subsections~\ref{sec:=-c-b}--\ref{sec:a-=-2c}, we discuss these
equations. Subsection~\ref{sec:a-+-b} is devoted to Case~(\ref{item:9}).

\begin{remark}
  Obviously, the transpositions $x\leftrightarrow y$, $a\leftrightarrow b$
  leave Equation~\eqref{eq:6} invariant, and we identify below the cases that
  are equivalent under this transformation.
\end{remark}

\begin{remark}
  We also omit below the case $a = b = -1$ ($A = B = C$) that admits local
  symplectic and Poisson structures that are a trivial consequence of the fact
  that~\eqref{eq:1} is an Euler-Lagrange equation for these values of the
  parameters.
\end{remark}

\subsection{$A + B + C = 0$}
\label{sec:a-+-b}

Let us discuss this case in detail to illustrate how the procedure works.

\subsubsection{The tangent covering}
\label{sec:tangent-covering}

The cosymmetries of $\mathcal{T}\mathcal{E}$ are presented in
Appendix~\ref{sec:cosymmetries-case-a}, the corresponding conservation law are
given in Appendix~\ref{sec:cons-laws-tang}; to the two-component ones there
correspond the nonlocal variables defined by
\begin{equation}\label{eq:8}
  \begin{array}{rcl}
    w_{1,t}
    &= & \dfrac{f_1u_t}{u_x^2}q_x - \dfrac{af_1}{u_x}q_t
         + \dfrac{(a - 1)f_1u_{xt}}{u_x^2}q,\\
    w_{1,y}
    &= & -\dfrac{(a - 1)f_1\big((a - 1)u_yu_{xt} -
         au_xu_{yt}\big)}{u_x^2u_t}q 
         - \dfrac{(a - 1)f_1u_y}{u_x^2}q_x;\\[3pt]
    w_{2,t}
    &= & -\dfrac{f_2u_{yt}}{u_y^2}q + \dfrac{f_2 u_t}{u_y^2}q_y,\\
    w_{2,x}
    &= & -\dfrac{f_2\big(au_xu_{yt} - (a - 1)u_yu_{xt}\big)}{u_y^2u_t}q
         - \dfrac{(a - 1)f_2}{u_y}q_x + \dfrac{af_2u_x}{u_y^2}q_y;\\[3pt]
    w_{3,y}
    &= & -\dfrac{af_3u_{yt}}{u_t^2}q + \dfrac{f_3}{u_t}q_y
         + \dfrac{(a - 1)f_3u_y}{u_t^2}q_t,\\
    w_{3, x}
    &= & -\dfrac{af_3u_{xt}}{u_t^2}q + \dfrac{af_3u_x}{u_t^2}q_t
  \end{array}
\end{equation}

Let us solve the equation~$\tilde{\ell}_{\mathcal{E}}(\rho) = 0$, where
\begin{multline*}
  \rho = \rho_{0} q + \rho_{1} q_x + \dots + \rho_{15} q_{ttt}
  + \rho_{16} w_1 + \dots + \rho_{19}
  w_{1,xxx}\\ + \rho_{20} w_2 + \dots +
  \rho_{23} w_{2,yyy}
  + \rho_{24} w_3 + \dots + \rho_{27}
  w_{3,ttt},
\end{multline*}
and the coefficients~$\rho_i$ are arbitrary smooth functions on~$\mathcal{E}$
of jet order~$3$. Then, by straightforward computations, we obtain the
following nontrivial solutions
\begin{equation}
  \label{eq:a8}
  \mathcal{R}_1 = \frac{u_xw_1}{f_1},\quad
  \mathcal{R}_2 = \frac{u_yw_2}{f_2},\quad
  \mathcal{R}_3 = \frac{u_tw_3}{f_3}.
\end{equation}
Rewrite System~\eqref{eq:8} in the form
\begin{equation}
  \label{eq:a9}
  \begin{array}{rcl}
    D_t(\phi_1)
    &= & \dfrac{f_1u_t}{u_x^2}D_x(\phi) - \dfrac{af_1}{u_x}D_t(\phi)
         + \dfrac{(a - 1)f_1u_{xt}}{u_x^2}\phi,\\
    D_y(\phi_1)
    &= & -\dfrac{(a - 1)f_1\big((a - 1)u_yu_{xt} -
         au_xu_{yt}\big)}{u_x^2u_t}\phi\\
         
    &-&\dfrac{(a - 1)f_1u_y}{u_x^2}D_x(\phi);\\[16pt]
    D_t(\phi_2)
    &= & -\dfrac{f_2u_{yt}}{u_y^2}\phi + \dfrac{f_2 u_t}{u_y^2}D_y(\phi),\\
    D_x(\phi_2)
    &= & -\dfrac{f_2\big(au_xu_{yt} - (a - 1)u_yu_{xt}\big)}{u_y^2u_t}\phi
         - \dfrac{(a - 1)f_2}{u_y}D_x(\phi)\\         
    &+&\dfrac{af_2u_x}{u_y^2}D_y(\phi);\\[16pt]
    D_y(\phi_3)
    &= & -\dfrac{af_3u_{yt}}{u_t^2}\phi + \dfrac{f_3}{u_t}D_y(\phi)
         + \dfrac{(a - 1)f_3u_y}{u_t^2}D_t(\phi),\\
    D_x(\phi_3)
    &= & -\dfrac{af_3u_{xt}}{u_t^2}\phi + \dfrac{af_3u_x}{u_t^2}D_t(\phi).
  \end{array}
\end{equation}

\begin{proposition}
  Let $\phi$ be a symmetry of Equation~\eqref{eq:6} with $a + b + 1 = 0$
  and~$\phi_1$\textup{,} $\phi_2$\textup{,} $\phi_3$ be solutions to
  System~\eqref{eq:a9}. Then
  \begin{equation*}
    \phi_1 = \mathcal{R}_1(\phi) = \frac{u_x\phi}{f_1},\quad
    \phi_2 = \mathcal{R}_2(\phi) = \frac{u_y\phi}{f_2},\quad
    \phi_3 = \mathcal{R}_3(\phi) = \frac{u_t\phi}{f_3}.    
  \end{equation*}
  are symmetries of~\eqref{eq:6} as well\textup{,} i.e.\textup{,} system of
  relations~\eqref{eq:a8} and~\eqref{eq:a9} is a B\"{a}cklund transformation
  that defines a recursion operator for symmetries of~\eqref{eq:6}. The
  inverse statement is also valid.
\end{proposition}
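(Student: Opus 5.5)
The plan is a direct verification in two parts. In the first part I would show that, for any symmetry $\phi$, the functions $\phi_i=\mathcal{R}_i(\phi)$ indeed solve System~\eqref{eq:a9}; this is the sense in which \eqref{eq:a8} and \eqref{eq:a9} together form a B\"acklund transformation. In the second part I would show that these $\phi_i$ are again symmetries. The case hypothesis $a+b+1=0$ (the Veronese case $A+B+C=0$, written with $b=-B/C$ in \eqref{eq:6}) is used throughout to rewrite $b$ in terms of $a$, which is why only $a$ appears in \eqref{eq:8}.

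\textbf{Compatibility.} Take $\phi_1=u_x\phi/f_1$ and compute
\[
D_t(\phi_1)=\frac{u_{xt}\phi+u_xD_t(\phi)}{f_1}.
\]
Comparing this with the first line of \eqref{eq:a9} gives a first-order relation between $\phi$, $D_x(\phi)$ and $D_t(\phi)$. For this to hold for \emph{every} symmetry, it must follow from $\ell_{\mathcal{E}}(\phi)=0$ after a suitable combination. The cleaner route is therefore the reverse logic: treat $\phi_1$ as a nonlocal potential defined by \eqref{eq:a9}, which is consistent precisely because \eqref{eq:8} is a two-component conservation law of $\mathcal{T}\mathcal{E}$ (Proposition~\ref{sec:tang-cotang-cover-2}). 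Then $\mathcal{R}_1=u_xw_1/f_1$, found as a shadow, satisfies $\tilde{\ell}_{\mathcal{E}}(\mathcal{R}_1)=0$ in the covering $\tau$. By item~(\ref{item:2}) of Proposition~\ref{sec:tang-cotang-cover-1}, substituting $q=\phi$ and $w_1=\phi_1$ gives $\ell_{\mathcal{E}}(u_x\phi_1/f_1)=0$.

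At this point I notice a subtlety in how the statement should be read. As written, the proposition sets $\phi_1=\mathcal{R}_1(\phi)=u_x\phi/f_1$, but the formula from the shadow is $\mathcal{R}_1=u_xw_1/f_1$. I would therefore read the claim as: the new symmetry is $u_x\phi_1/f_1$, where $\phi_1$ is the nonlocal solution of \eqref{eq:a9}. I would state this interpretation explicitly and prove it in that form.

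\textbf{Main computation.} The key step is to check that
\[
\ell_{\mathcal{E}}\Bigl(\frac{u_x\phi_1}{f_1}\Bigr)=0
\quad\text{modulo } \ell_{\mathcal{E}}(\phi)=0 \text{ and \eqref{eq:a9}.}
\]
I would proceed as follows:
\begin{enumerate}
\item Expand $\ell_{\mathcal{E}}$ applied to a product using the Leibniz rule.
\item Use that $u_x$ is itself a symmetry, so $\ell_{\mathcal{E}}(u_x)=0$. This kills the term in which $\phi_1$ appears undifferentiated.
\item The remaining terms are first derivatives $D_x\phi_1$, $D_y\phi_1$, $D_t\phi_1$, multiplied by second derivatives of $u_x$. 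Replace $D_y\phi_1$ and $D_t\phi_1$ directly from \eqref{eq:a9}.
\item For the terms involving $D_x\phi_1$, note that in $\ell_{\mathcal{E}}$ the derivative $D_x$ appears only inside the mixed operators $D_xD_y$ and $D_xD_t$. Rewrite $D_xD_y\phi_1=D_x(D_y\phi_1)$ and $D_xD_t\phi_1=D_x(D_t\phi_1)$ before substituting, so that no unresolved $D_x\phi_1$ survives.
\item Reduce all jets of $u$ using \eqref{eq:6}, and all second derivatives of $\phi$ using $\ell_{\mathcal{E}}(\phi)=0$.
\end{enumerate}
The factor $f_1$ passes through the operators because $f_1$ depends on $x$ alone and $D_t$, $D_y$ do not act on it. The cases $\mathcal{R}_2$ and $\mathcal{R}_3$ are handled identically; for $\mathcal{R}_2$ one can also appeal to the symmetry $x\leftrightarrow y$, $a\leftrightarrow b$.

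\textbf{Converse.} For the converse, given $\phi_1$ such that $u_x\phi_1/f_1$ is a symmetry, I would solve \eqref{eq:a9} for $\phi$. The equation $D_y(\phi_1)=\dots$ in \eqref{eq:a9} can be read as a defining relation for $\phi$, so $\phi$ is a nonlocal potential. I would then verify $\ell_{\mathcal{E}}(\phi)=0$ by the same elimination, so that the correspondence is bijective up to these potentials.

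\textbf{Expected obstacle.} The main difficulty I expect is the bookkeeping in the main computation: making sure that every occurrence of $D_x\phi_1$ is eliminated via the mixed derivatives and never needs to be resolved on its own, and that the coefficients cancel exactly when $b=-1-a$. I would organize this by collecting the coefficients of $\phi$, $\phi_x$, $\phi_y$, $\phi_t$ and their second derivatives.
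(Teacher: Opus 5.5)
Your reading of the misprinted formula is the right one: the new symmetry is $u_x\phi_1/f_1$, with $\phi_1$ a solution of \eqref{eq:a9}. Your route is also the paper's. The paper gives no argument beyond the computer solution $\tilde{\ell}_{\mathcal{E}}(u_xw_1/f_1)=0$ in the covering $\tau$, combined with item~(\ref{item:2}) of Proposition~\ref{sec:tang-cotang-cover-1}, and your main computation is a hand check of exactly that identity.

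\textbf{The hypothesis is wrong.} The condition $a+b+1=0$ in the statement is a misprint, and you build it in. The Veronese case $A+B+C=0$ is $a+b=1$ (Case~(\ref{item:9}); the normalization of \eqref{eq:6} is in fact $a=-A/C$, $b=-B/C$). Moreover, \eqref{eq:8} and \eqref{eq:a9} are written after substituting $-b=a-1$, which is where the factors $a-1$ come from. With your $b=-1-a$ the cancellations you expect do not occur. For instance, on $\mathcal{T}\mathcal{E}$ one finds $D_x(w_{3,y})-D_y(w_{3,x})=\frac{(a+b-1)f_3}{u_t^2}\bigl(u_yq_{xt}+u_{xt}q_y-\frac{2u_yu_{xt}}{u_t}q_t\bigr)$, so even the covering is inconsistent unless $a+b=1$. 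You must use $b=1-a$ throughout.

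\textbf{Step~4 misidentifies the mechanism.} Rewriting $D_xD_y\phi_1=D_x(D_y\phi_1)$ and $D_xD_t\phi_1=D_x(D_t\phi_1)$ does handle the second-order terms. But $\ell_{\mathcal{E}}$ also contains the first-order term $-\frac{au_{yt}}{u_t}D_x$, and the Leibniz rule produces $D_y(u_x)\,D_x\phi_1$ and $-\frac{bu_y}{u_t}D_t(u_x)\,D_x\phi_1$. Here $D_x\phi_1$ is the free fiber coordinate $w_{1,x}$ of $\tau$, which \eqref{eq:a9} does not determine. These terms drop out only because their total coefficient is $\frac{1}{f_1}\bigl(u_{xy}-\frac{au_xu_{yt}+bu_yu_{xt}}{u_t}\bigr)$, i.e.\ it vanishes by \eqref{eq:6} itself. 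This is the point that makes the check close.

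With it the computation is short. First replace $\phi_1$ by $\phi_1/f_1$. This is allowed because \eqref{eq:a9} prescribes only $D_t\phi_1$ and $D_y\phi_1$; that, not commutation of $f_1$ with $\ell_{\mathcal{E}}$ (which contains $D_x$), is why $f_1$ drops out. Put $r=u_t/u_x$, $s=u_y/u_x$, $\chi=\phi/u_x$. Then \eqref{eq:a9} reads $D_t\phi_1=r\chi_x-r_x\chi-a\chi_t$ and $D_y\phi_1=b(s\chi_x-s_x\chi)$, and on $\mathcal{E}$
\[
\frac{u_t}{u_x^2}\,\ell_{\mathcal{E}}(u_xh)=rh_{xy}-r_xh_y-b\,(sh_{xt}-s_xh_t)-ah_{yt},\qquad K:=rs_x-sr_x-as_t=\frac{u_tu_{xy}-au_xu_{yt}-bu_yu_{xt}}{u_x^2}.
\]
Substitute \eqref{eq:a9} with $h=\phi_1$ and $h_{yt}=D_t(D_yh)$. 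All terms in $\chi_{xx}$, $\chi_{xt}$, $\chi_t$ cancel identically, leaving $b\bigl(K\chi_x-D_x(K)\chi\bigr)$, which vanishes on $\mathcal{E}$. The condition $\ell_{\mathcal{E}}(\phi)=0$ is needed only for the compatibility of \eqref{eq:a9}, since $D_y(D_t\phi_1)-D_t(D_y\phi_1)=\frac{u_t}{u_x^2}\ell_{\mathcal{E}}(\phi)$ on $\mathcal{E}$.

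\textbf{The converse.} Before verifying $\ell_{\mathcal{E}}(\phi)=0$, you should first check that the system obtained by solving \eqref{eq:a9} for $D_x\phi$ and $D_t\phi$ is compatible modulo $\ell_{\mathcal{E}}(u_x\phi_1/f_1)=0$. The paper gives no argument there either.
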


\subsubsection{The cotangent covering}
\label{sec:cotangent-covering}

Two-component conservation laws from Appendix~\ref{sec:cons-laws-cotang} lead
to the nonlocal variables
\begin{align*}
  w_{1,y} &= \frac{(a - 1)f_1((2 a - 1)u_yu_{xt} - 2au_xu_{yt})u_x}{u_t^2}p
            - \frac{(a - 1)f_1u_y u_x}{u_t}p_x,\\
  w_{1,t} &= \frac{f_1(au_xu_{tt} - (2a - 1)u_tu_{xt})u_x}{u_t^2}p
            + f_1u_xp_x - \frac{af_1u_x^2}{u_t}p_t;\\[8pt]
  w_{2,x} &= \frac{f_2u_y(au_xu_{yt} -(a - 1)u_yu_{xt})}{u_t^2}p
            - \frac{(a - 1)f_2u_y^2}{u_t}p_x + \frac{af_2u_yu_x}{u_t}p_y,\\
  w_{2,t}&= \frac{f_2u_yu_{yt}}{u_t}p + f_2u_yp_y;\\[8pt]
  w_{3,x} &= -\frac{f_3(au_xu_{tt} - (2a - 1)u_tu_{xt})}{u_t}p
            + af_3u_xp_t - f_3u_tp_x;\\
  w_{3,y} &= \frac{(a - 1)f_3(2u_tu_yt - u_yu_{tt})}{u_t}p + (a - 1)f_3u_yp_tu_y.
\end{align*}
Solving the equation $\tilde{\ell}_{\mathcal{E}}(\rho) = 0$ in this covering
with respect to the function\begin{multline*}
  \rho = \rho_{0} p + \rho_{1} p_x + \dots + \rho_{15} p_{ttt}
  + \rho_{16} w_1 + \dots + \rho_{19}w_{1,xxx}\\
  + \rho_{20} w_2 + \dots +
  \rho_{23} w_{2,yyy}
  + \rho_{24} w_3 + \dots + \rho_{27}
  w_{3,ttt},
\end{multline*}
we get the following nontrivial solutions
\begin{equation*}
  \mathcal{R}_1 = \frac{u_tw_1}{f_1u_x^2},\quad
  \mathcal{R}_2 = \frac{u_tw_2}{f_2u_y^2},\quad
  \mathcal{R}_3 = \frac{w_3}{f_3u_t}.
\end{equation*}
that provide recursion operators for cosymmetries.

In all the subsequent cases, nontrivial structures arise as solutions of
$\tilde{\ell}_{\mathcal{E}}(\rho) = 0$ in the coverings
over~$\mathcal{T}\mathcal{E}$ only, i.e., the corresponding B\"{a}cklund
transformations relate symmetries and cosymmetries\footnote{Other solutions
  will be presented in the second part.}.

\subsection{$A = -C$, $B = -C$}
\label{sec:=-c-b}

The structure reads
\begin{multline*}
  \mathcal{I} = \frac{u_t(2f_2u_y - f_{2, y}u)}{f_2u_y}q
  - \frac{u_t(2f_1u_x - f_{1, x}u)}{f_1^2u_x}w_{11} -
  \frac{uu_t}{f_1u_x}w_{11,x} 
  - \frac{2u_t}{f_1u_x}w_{12}\\
  - \frac{u_t(2f_2u_y - f_{2, y}u)}{f_2^2u_y}w_{21} -
  \frac{uu_t}{f_2u_y}w_{21,y} - \frac{u_t}{f_2u_y}w_{22} - \frac{uu_t}{u_y}q_y, 
\end{multline*}
where the nonlocal variables are defined by
\begin{align*}
  w_{11,y}
  &= \frac{f_1(u_xu_{yt} + u_yu_{xt})}{u_xu_t}q
             - \frac{f_1u_y}{u_x} q_x ;\\
  w_{11,t}
  &= \frac{f_1u_{xt}}{u_x}q - \frac{f_1u_t}{u_x}q_x + f_1q_t,\\
  w_{12,y}
  &= f_1\alpha q - \frac{f_1u_y(uu_{xx} + 2u_x^2)}{2 u_x^2}q_x
    + \frac{f_1uu_y}{2u_x}q_{xx},
    \intertext{where}
    \alpha
  &= -\frac{uu_yu_{xxt}}{2u_xu_t} - \frac{uu_yu_{xtt}}{2u_t^2}
    - \frac{uu_xu_{ytt}}{2u_t^2}
    + \frac{uu_yu_{xt}u_{xx}}{2u_tu_x^2}\\
  &+ \Big(\frac{u_y(uu_{tt} + 2u_t^2)}{2u_t^3}
    - \frac{uu_{yt}}{u_t^2}\Big)u_{xt}
    + \frac{u_x(uu_{tt} + 2u_t^2) u_yt}{2u_t^3},\\
  w_{12,t}
  &= \frac{f_1((uu_xx + u_x^2)u_{xt} - uu_xu_{xxt})}{2u_x^2}q
    - \frac{f_1u_t(uu_{xx} + u_x^2)}{2u_x^2}q_x\\
  &+ \frac{f_1u_x}{2}q_t + \frac{f_1uu_t}{2u_x}q_xx
    - \frac{f_1u}{2}q_{xt};\\
  w_{21,x}
  &= \frac{f_2(u_xu_{yt} + u_yu_{xt})}{u_tu_y }q- f_2q_x
    - \frac{f_2 u_x}{u_y}q_y,\\
  w_{21,t}
  &= \frac{f_2u_{yt}}{u_y}q - \frac{f_2u_t}{u_y}q_y;\\
  w_{22,x}
  &= \beta q - (f_{2, y}u + 2f_2u_y) q_x
    - \frac{f_2u_x(uu_{yy} + 2 u_y^2)}{u_y^2}q_y
    + \frac{f_2uu_x}{u_y}q_{yy},
    \intertext{where}
    \beta
  &= -\frac{f_2uu_yu_{xtt}}{u_t^2} - \frac{f_2uu_xu_{yyt}}{u_tu_y}
    - \frac{f_2uu_xu_{ytt}}{u_t^2}
    - \frac{f_2u(2u_tu_{yt} - u_yu_{tt}) u_xt}{u_t^3}\\
    &+ \frac{f_2uu_x(u_t^2u_{yy} + u_y^2u_tt) u_yt}{u_y^2u_t^3} - f_{2, y}u_x;\\ 
  w_{22,t}
  &= -\frac{f_{2,y}u_t u_y^2 + f_2(uu_yu_{yyt} + (u_y^2 -
    uu_{yy})u_{yt})}{u_y^2}q 
    - \frac{f_2u_t(uu_{yy} + u_y^2)}{u_y^2}q_y\\
  &- (f_{2, y}u + f_2u_y)q_t + \frac{f_2uu_tq_{yy}}{u_y} - f_2uq_{yt}.
\end{align*}

\subsection{$A = -C$, $B = C$}
\label{sec:=-c-b-1}

One has
\begin{multline*}
  \mathcal{I} = \left(8u_t + \frac{f_{3, t}u}{f_3} + \frac{f_{2,y}uu_t}{f_2u_y }\right) q
  + \frac{uu_t}{u_y}q_y + u q_t
  - \frac{u_t (2f_2u_y - f_{2, y}u)}{f_2^2u_y}w_{21}\\
  - \frac{u u_t}{f_2u_y}w_{21,y} - \frac{u_t}{f_2u_y}w_{22}
  - \frac{2 u_t f_3 - u f_{3,t}}{f_3^2}w_{31} - \frac{u}{f_3}w_{31,t}
  - \frac{2}{f_3} w_{32},
\end{multline*}
where the defining equations for the nonlocal variables are
\begin{align*}
  w_{21,x}
  &= \frac{f_2(u_xu_{yt} - u_yu_{xt})}{u_yu_t}q + f_2q_x -
    \frac{f_2u_x}{u_y}q_y,\\
  w_{21,t} &= \frac{f_2u_{yt}}{u_y}q - \frac{f_2u_t}{u_y}q_y;\\
  w_{22,x}
  &= \alpha q + (2f_2u_y + f_{2,y} u)q_x - \frac{f_2uu_xu_{yy}}{u_y^2}q_y +
    f_2\frac{u u_x}{u_y}q_{yy},
    \intertext{where}
    \alpha
           &= -\frac{f_2u_yuu_{xtt}}{u_t^2} - \frac{f_2uu_xu_{yyt}}{u_yu_t}
             + \frac{f_2uu_xu_{ytt}}{u_t^2}
             + \frac{f_2u_y(uu_{tt} - 2u_t^2) u_{xt}}{u_t^3}\\
  &+ \frac{f_2uu_xu_{yt}u_{yy}}{u_tu_y^2}
    - \frac{f_2u_x(uu_{tt} - 2u_t^2) u_yt}{u_t^3} + f_{2, y}u_x,\\
  w_{22,t}
  &= \left(\frac{f_2(uu_{yy} + 3u_y^2) u_{yt}}{u_y^2} + u_tf_{2,y}
    - \frac{f_2uu_{yyt}}{u_y}\right) q
    - \frac{f_2u_t(uu_{yy} + u_y^2)}{u_y^2}q_y\\
  &- \frac{(-f_2u_y - f_{2,y}u)q_t + f_2uu_t}{u_y}q_{yy} + f_2uq_{yt};\\
  w_{31,x} &= \frac{f_3u_{xt}}{u_t}q + f_3q_x - \frac{f_3u_x}{u_t}q_t,\\
  w_{31,y} &= \frac{f_3u_{yt}}{u_t}q - \frac{f_3u_y}{u_t}q_t;\\
  w_{32,x} &= \frac{(-f_3uu_tu_{xtt} + f_3(uu_{tt} + 2u_t^2)u_{xt} + f_{3,
             t}u_t^2 u_x)}{2 u_t^2}q + \frac{f_{3,t} u + 2f_3u_t}{2}q_x\\
  &- \frac{f_3uu_xu_{tt}}{2u_t^2}q_t + \frac{f_3uu_x}{2u_t}q_{tt},\\
  w_{32,y}
  &= -\frac{f_3uu_tu_{ytt} - f_3uu_{yt}u_{tt} - u_t^2 (f_{3,t}u_y +
    3f_3u_{yt})}{2 u_t^2}q\\ 
  &+ \frac{f_{3, t}u + f_3u_t}{2}q_y - \frac{f_3uu_yu_{tt} +
    f_3u_t^2u_y}{2u_t^2}q_t 
    +  \frac{f_3uu_y}{2u_t}q_{tt} + \frac{f_3u}{2}q_{yt}.
\end{align*}

\subsection{$A = -C$, $2B = -3C$}
\label{sec:a-=-c}

We have
\begin{equation*}
  \mathcal{I} = -u_tq + \frac{u_t}{f_1\sqrt{u_x}}w_1
  + \frac{8u_t}{9f_2u_y}w_{21}
  - \frac{4(f_{2, y} u - 2 f_2u_y)u_t}{9f_2^2u_y}w_{22}
  + \frac{4uu_t}{9f_2u_y}w_{22,y}, 
\end{equation*}
where
\begin{align*}
  w_{1,y}
  &= \frac{3f_1(2u_xu_{yt} + 3u_yu_{xt})}{4\sqrt{u_x}u_t}q
    - \frac{3f_1 u_y}{2\sqrt{u_x}}q_x,\\
  w_{1,t}
  &= \frac{3f_1u_{xt}}{2\sqrt{u_x}}q - \frac{f_1u_t}{\sqrt{u_x}}q_x
    + f_1\sqrt{u_x}q_t;\\
  w_{21,x}
  &= f_2 \alpha q + \frac{3f_2(uu_{yt} - u_tu_y)}{4u_t}q_x
    - \frac{f_2(4uu_tu_xu_{yy} - 9uu_y^2u_{xt} + 4u_tu_xu_y^2)}{8u_y^2u_t}q_y\\
  &- \frac{3f_2u(2u_xu_{yt} + 3u_yu_{xt})}{8u_t^2}q_t
    + \frac{9f_2uu_y}{8u_t}q_{xt} + \frac{f_2uu_x}{2u_y}q_{yy}
    + \frac{3f_2uu_x}{4u_t}q_{yt},
    \intertext{where}
    \alpha
  &= -\frac{9uu_yu_{xtt}}{8u_t^2} - \frac{uu_xu_{yyt}}{2u_yu_t}
    - \frac{3uu_xu_{ytt}}{4u_t^2}
    - \frac{3(5uu_tu_{yt} - 3uu_yu_{tt} - 2u_t^2u_y)u_{xt}}{8u_t^3}\\
  &+ \frac{uu_xu_{yt}u_{yy}}{2u_y^2u_t}
    + \frac{u_x(3uu_{tt} + 2u_t^2)u_{yt}}{4u_t^3},\\
    w_{21,t}
  &= -\frac{f_2((u_y^2 - 2uu_{yy})u_{yt} + 2uu_yu_{yyt})}{4u_y^2}q\\
    &- \frac{f_2u_t(2uu_{yy} - u_y^2)}{4u_y^2}q_y +
      \frac{f_2uu_tq_{yy}}{2u_y};\\ 
    w_{22,x}
  &= \frac{f_2(2u_xu_{yt} + 3u_yu_{xt})}{2u_yu_t}q
    - \frac{3f_2}{2}q_x - \frac{f_2u_x}{u_y}q_y,\\
  w_{22,t}
  &= \frac{f_2u_{yt}}{u_y} q - \frac{f_2u_t}{u_y}q_y.
\end{align*}

\subsection{$A = -C$, $B = -3C$}
\label{sec:a-=-c-3}

Here we have
\begin{equation*}
  \mathcal{I} = \frac{u_t}{f_2u_y}w_{21}
  + \frac{(2f_2u_y - f_{2, y}u) u_t}{2f_2^2u_y}w_{22}
  + \frac{uu_t}{2f_2u_y}w_{22,y}
\end{equation*}
with the nonlocal variables defined by
\begin{align*}
  w_{21,x}
  &= f_2\alpha q + \frac{3f_2(uu_{yt} - u_tu_y)}{2u_t}q_x
    - \frac{f_2((u_tu_x - 9uu_{xt})u_y^2 + uu_tu_xu_{yy})}{2u_y^2u_t}q_y\\
  &- \frac{3f_2u(u_xu_{yt} + 3u_yu_{xt})}{2u_t^2}q_t
    + \frac{9f_2uu_y}{2u_t}q_{xt} + \frac{f_2uu_x}{2u_y}q_{yy}
    + \frac{3f_2uu_x}{2u_t}q_{yt},
    \intertext{where}
    \alpha
  &= -\frac{9uu_yu_{xtt}}{2u_t^2} - \frac{uu_xu_{yyt}}{2u_yu_t}
    - \frac{3uu_xu_{ytt}}{2u_t^2} - \frac{6uu_{yt}}{u_t^2}
    + \frac{3u_y(3uu_{tt} + u_t^2)u_{xt}}{2u_t^3}\\
  &+ \frac{uu_xu_{yt}u_{yy}}{2u_y^2u_t}
    + \frac{u_x(3uu_{tt} + u_t^2)u_{yt}}{2u_t^3},\\
  w_{21,t}
  &= -\frac{f_2((2u_y^2 - uu_{yy})u_{yt} + uu_yu_{yyt})}{2u_y^2}q
    - \frac{f_2u_t(uu_{yy} - 2 u_y^2)}{2u_y^2}q_y
    + \frac{f_2uu_t}{2u_y}q_{yy};\\
  w_{22,x}
  &= \frac{f_2(u_xu_{y}t + 3u_yu_{xt})}{u_yu_t}q - 3f_2q_x -
    \frac{f_2u_x}{u_y}q_y,\\
  w_{22,t}
  &= \frac{f_2u_yt}{u_y}q - \frac{f_2u_t}{u_y}q_y.
\end{align*}

\subsection{$A = -C$, $2B = 3C$}
\label{sec:a-=-c-2}

We have
\begin{equation*}
  \mathcal{I} = -\frac{27u_t}{8}q + \frac{u_t}{f_2u_y}w_{21}
  - \frac{(f_{2, y}u - 2f_2u_y)u_t}{2f_2^2u_y}w_{22}
  + \frac{u u_t}{2f_2u_y}w_{22,y} + \frac{9\sqrt{u_t}}{8f_3}w3,
\end{equation*}
where
\begin{align*}
  w_{21,x}
  &= f_2\alpha q - \frac{3f_2(uu_{yt} - u_tu_y)}{4u_t}q_x
    - \frac{f_2(4uu_tu_xu_{yy} - 9uu_y^2u_{xt} + 4
    u_tu_xu_y^2)}{8u_y^2u_t}q_y\\
  &+ \frac{3f_2u(2u_xu_{yt} - 3u_yu_{xt})}{8u_t^2}q_t
    + \frac{f_2uu_x}{2u_y}q_{yy} - \frac{3f_2uu_x}{4u_t}q_{yt}
    + \frac{9f_2uu_y}{8u_t}q_{xt},
    \intertext{where}
    \alpha
  &= -\frac{9uu_yu_{xtt}}{8u_t^2} - \frac{uu_xu_{yyt}}{2u_yu_t}
    + \frac{3uu_xu_{ytt}}{4u_t^2} - \frac{3uu_xu_{yt}u_{tt}}{4u_t^3}\\
  &- \frac{3(uu_tu_{yt} - 3uu_yu_{tt} + 2u_t^2u_y)u_{xt}}{8u_t^3}
     + \frac{u_x(uu_{yy} + u_y^2)u_{yt}}{2u_tu_y^2},\\
  w_{21,t}
  &= \frac{f_2((2uu_{yy} + 5u_y^2)u_{yt} - 2uu_yu_{yyt})}{4u_y^2}q
    - \frac{f_2u_t(2uu_{yy} + 5u_y^2)}{4u_y^2}q_y\\
    &+ \frac{f_2uu_t}{2u_y}q_{yy};\\
  w_{22,x}
  &= \frac{f_2(2u_xu_{yt} - 3u_yu_{xt})}{2u_yu_t}q
    + \frac{3f_2}{2}q_x - \frac{f_2u_x}{u_y}q_y,\\
  w_{22,t} &= \frac{f_2u_{yt}}{u_y}q - \frac{f_2 u_t}{u_y}q_y;\\
  w_{3,x}
  &= \frac{f_3(3u_{xt}q  - 2u_xq_t  + 2u_tq_x)}{2\sqrt{u_t}},\\
  w_{3,y} &= \frac{3f_3(u_{yt}q - u_yq_t)}{2\sqrt{u_t}}.
\end{align*}

\subsection{$A = -C$, $B = 3C$}
\label{sec:a-=-c-1}

The structure is
\begin{equation*}
  \mathcal{I} = -\frac{9u_t}{2}q + \frac{u_t}{f_2u_y}w_{21}
  - \frac{(f_{2, y}u - 2f_2u_y)u_t}{2f_2^2u_y}w_{22}
  + \frac{uu_t}{2f_2u_y}w_{22,y},
\end{equation*}
with the nonlocal variables defined by
\begin{align*}
  w_{21,x}
  &= f_2\alpha q - \frac{3f_2(uu_{yt} - u_tu_y)}{2u_t}q_x
    - \frac{f_2((u_tu_x - 9uu_{xt})u_y^2 + uu_tu_xu_{yy})}{2u_y^2u_t}q_y\\
  &+ \frac{3f_2u(u_xu_{yt} - 3u_yu_{xt})}{2u_t^2}q_t
    + \frac{9f_2uu_y}{2u_t}q_{xt} + \frac{f_2uu_x}{2u_y}q_{yy}
    - \frac{3f_2uu_x}{2u_t}q_{yt},
    \intertext{where}
    \alpha
  &= -\frac{9uu_yu_{xtt}}{2u_t^2} - \frac{uu_xu_{yyt}}{2u_yu_t}
    + \frac{3uu_xu_{ytt}}{2u_t^2}
    - \frac{3u(u_xu_{yt} - 3u_yu_{xt})u_{tt}}{2u_t^3}\\
  &+ \frac{uu_xu_{yt}u_{yy}}{2u_y^2u_t}
    - \frac{3(2uu_{yt} + u_tu_y)u_{xt}}{2u_t^2} + \frac{u_xu_{yt}}{2u_t},\\
  w_{21,t}
  &= -\frac{f_2(uu_yu_{yyt} - u u_{yt}u_{yy} - 4u_y^2u_{yt})}{2u_y^2}q
    - \frac{f_2u_t(uu_{yy} + 4u_y^2)}{2u_y^2}q_y\\
  &+ \frac{f_2uu_t}{2u_y}q_{yy};\\
  w_{22,x}
  &= \frac{f_2 (u_xu_{yt} - 3u_yu_{xt})}{u_yu_t}q
    + 3f_2q_x - \frac{f_2u_x}{u_y}q_y,\\
    w_{22,t}
  &= \frac{f_2u_{yt}}{u_y}q - \frac{f_2u_t}{u_y}q_y.
\end{align*}

\subsection{$3A = 2C$, $3B = -2C$}
\label{sec:3a-=-2c}

The structure is
\begin{equation*}
  \mathcal{I} = \frac{f_{3,t}u + f_3u_t}{f_3}q + uq_t
  + \frac{9u_t}{4f_1\sqrt{u_x}}w_1
  + \frac{f_{3,t}u - 2f_3u_t}{f_3^2}w_{31} - \frac{u}{f_3}w_{31,t} -
  \frac{2}{f_3}w_{32}
\end{equation*}
with
\begin{align*}
  w_{1,y}
  &= -\frac{4f_1(u_xu_{yt} - u_yu_{xt})}{9\sqrt{u_x}u_t}q
    - \frac{2f_1u_y}{3\sqrt{u_x}}q_x,\\
  w_{1,t}
  &= \frac{2f_1u_{xt}}{3\sqrt{u_x}}q - \frac{f_1u_t}{\sqrt{u_x}}q_x
    - \frac{2f_1\sqrt{u_x}}{3}q_t;\\
  w_{31,x}
  &= -\frac{2f_3u_{xt}}{3u_t}q + f_3q_x + \frac{2f_3u_x}{3u_t}q_t,\\
  w_{31,y}
  &= -\frac{2f_3u_{yt}}{3u_t}q + \frac{2f_3u_y}{3u_t}q_t;\\
  w_{32,x}
  &= \frac{(3f_{3,t}u_x  + f_3u_{xt}) u_t^2 + 2f_3uu_tu_{xtt} -
    2f_3uu_{xt}u_{tt}}{6u_t^2}q 
    + \frac{f_3u_x(2uu_{tt} + 5 u_t^2)}{6u_t^2}q_t\\
  &+ \frac{f_{3,t}u + 2f_3u_t}{2}q_x - \frac{f_3uu_x}{(3 u_t)}q_{tt},\\
  w_{32,y}
  &= \frac{(3f_{3,t}u_y + 4f_3u_{yt})u_t^2 + 2f_3uu_tu_{ytt} -
    2f_3uu_{yt}u_{tt}}{6u_t^2}q + \frac{f_{3, t}u + f_3u_t}{2}q_y\\
  &+ \frac{f_3u_y(uu_{tt} + u_t^2)}{3u_t^2}q_t
    - \frac{f_3uu_y}{3u_t}q_{tt} + \frac{f_3u}{2}q_{yt}.
\end{align*}

\subsection{$3A = C$, $3B = -C$}
\label{sec:3a-=-c}

The structure reads
\begin{equation*}
  \mathcal{I} = -\frac{f_3u_t + f_{3, t}u}{2f_3}q - \frac{u}{2}q_t
  - \frac{f_{3,t}u - 2f_3u_t}{2f_3^2}w_{31} + \frac{u}{2f_3}w_{31,t}
  + \frac{1}{f_3}w_{32}
\end{equation*}
with
\begin{align*}
  w_{31,x}
  &= -\frac{f_3u_{xt}}{3u_t}q + f_3q_x + \frac{f_3u_x}{3u_t}q_t,\\
  w_{31,y}
  &= -\frac{f_3u_{yt}}{3u_t}q + \frac{f_3u_y}{3u_t}q_t;\\
  w_{32,x}
  &= \frac{f_3(2u_t^2 - uu_{tt})u_{xt} + 3f_{3,t}u_t^2u_x + f_3uu_tu_{xtt}}{6
    u_t^2}q 
    + \frac{f_3u_x(uu_{tt} + 4u_t^2)}{6u_t^2}q_t\\
  &+ \frac{f_{3,t} u + 2f_3u_t}{2}q_x - \frac{f_3uu_x}{6u_t}q_{tt},\\
  w_{32,y}
  &= \frac{3f_{3,t}u_t^2u_y + f_3uu_tu_{ytt} - f_3uu_{yt}u_{tt} +
    5f_3u_t^2u_{yt}}{6u_t^2}q\\
  &+ \frac{f_3u_t + f_{3,t}u}{2}q_y + \frac{f_3u_y(uu_{tt} +
    u_t^2)}{6u_t^2}q_t 
    + \frac{f_3u}{2}q_{yt} - \frac{f_3uu_y}{6u_t}q_{tt}.
\end{align*}

\subsection{$A = C$, $2B = C$}
\label{sec:a-=-c-4}

The structure is
\begin{equation*}
  \mathcal{I} = u_tq - \frac{u_t}{f_1\sqrt{u_x}}w_1 - \frac{\sqrt{u_t}}{f_3}w_3
\end{equation*}
with the nonlocalities defined by
\begin{align*}
  w_{1,y}
  &= -\frac{f_1(2u_xu_{yt} - u_yu_{xt})}{4\sqrt{u_x}u_t}q
    - \frac{f_1u_y}{2\sqrt{u_x}}q_x,\\
  w_{1,t}
  &= f_1\left(\frac{u_{xt}}{2\sqrt{u_x}}q -  \frac{u_t}{\sqrt{u_x}}q_x
    - \sqrt{u_x}q_t\right);\\
  w_{3,x}
  &= \frac{f_3(u_{xt}q - u_xq_t)}{\sqrt{u_t}},\\
  w_{3,y}
  &= f_3\left(\frac{u_{yt}}{\sqrt{u_t}}q - \frac{u_y}{2\sqrt{u_t}}q_t
    + \sqrt{u_t}q_y\right).
\end{align*}

\subsection{$A = -2C$, $B = -2C$}
\label{sec:a-=-2c}

The structure:
\begin{equation*}
  \mathcal{I} = u_tq - \frac{u_t}{2f_1\sqrt{u_x}}w_1
  - \frac{u_t}{2f_2\sqrt{u_y}}w_2,
\end{equation*}
defining equations:
\begin{align*}
  w_{1,y}
  &= \frac{4f_1(u_xu_{yt} + u_yu_{xt})}{\sqrt{u_x}u_t}q
    - \frac{2f_1u_y}{\sqrt{u_x}}q_x,\\
  w_{1,t}
  &= \frac{2f_1u_{xt}}{\sqrt{u_x}}q - \frac{f_1u_t}{\sqrt{u_x}}q_x
    + 2f_1\sqrt{u_x}q_t;\\
  w_{2,x}
  &= \frac{2f_2(u_xu_{yt} + u_yu_{xt})}{\sqrt{u_y}u_t}q
    - 2f_2\sqrt{u_y}q_x - \frac{2f_2u_x}{\sqrt{u_y}}q_y,\\
  w_{2,t}
  &= \frac{f_2u_{yt}}{\sqrt{u_y}}q - \frac{f_2u_t}{\sqrt{u_y}}q_y.
\end{align*}

\section*{Conclusion}
\label{sec:conclusion}

In the forthcoming second part of this study, we shall use the above results
to construct infinite hierarchies of integrable structures, nonlocal
symmetries, and cosymmetries.

\section*{Acknowledgements}
\label{sec:acknowledgements}

Computations were done with the \textsc{Jets} software,~\cite{BM-Jets}.

\appendix

\section{Cosymmetries in the case $A + B + C = 0$}
\label{sec:cosymmetries-case-a}

The cosymmetries up to jet order $3$ are
\begin{align*}
  \psi_0 &= f_0 u_t,\\
  \psi_1 &= \frac{f_1u_t}{u_x^2},\\
  \psi_2 &= \frac{f_2 u_t}{u_y^2},\\
  \psi_3&= \frac{f_3}{u_t},\\
  \psi_{41} &= \frac{u_{ytt}}{u_yu_t} - \frac{u_{yt}^2}{2u_tu_y^2}
              - \frac{u_{tt}u_{yt}}{u_t^2u_y},\\
  \psi_{42} &= \frac{u_{yyt}}{u_y^2} - \frac{u_{yt}^2}{2u_tu_y^2}
              - \frac{u_{yy}u_{yt}}{u_y^3},\\
  \psi_{43} &= \frac{u_{xxt}}{u_x^2} - \frac{u_{xt}^2}{2u_tu_x^2}
              - \frac{u_{xt}u_{xx}}{u_x^3},\\
  \psi_{44} &= \frac{u_{xtt}}{u_xu_t} - \frac{u_{xt}^2}{2u_tu_x^2}
              - \frac{u_{tt}u_{xt}}{u_t^2u_x},\\
  \psi_{45} &= \frac{u_tu_{yyy}}{u_y^3} - \frac{3u_tu_{yy}^2}{2u_y^4},\\
  \psi_{46} &= \frac{u_tu_{xxx}}{u_x^3} - \frac{3u_tu_{xx}^2}{2u_x^4},\\
  \psi_{47} &= \frac{u_tu_{xxx}}{u_x^3} + \frac{u_{ttt}}{u_t^2}
              - \frac{3u_{tt}^2}{2u_t^3} - \frac{3u_tu_{xx}^2}{2u_x^4},
\end{align*}
where $f_0 = f_0(u)$, $f_1 = f_1(x)$, $f_2 = f_2(y)$, $f_3 = f_3(t)$.

\section{Conservation laws of the tangent equation in the case $A + B + C =
  0$} 
\label{sec:cons-laws-tang}

The conservation laws of $\mathcal{T}\mathcal{E}$
\begin{equation*}
  \omega = X\,dy\wedge dt + Y\,dt\wedge dx + T\,dx\wedge dy
\end{equation*}
that correspond to the cosymmetries above are
\begin{align*}
  X_0 &= \big(f_0(u_{yt} - 2au_{yt}) - (a - 1)f_{0,u}u_yu_t\big)q + f_0u_tq_y,\\
  Y_0 &= (a - 1)(f_{0, u}u_tu_x  + 2f_0u_{xt}) q,\\
  T_0 &= f_0 \frac{(1 - a)u_yu_{xt} + au_xu_{yt}}{u_t} q - af_0u_xq_y
        + (a - 1)f_0u_yq_x;\\[3pt] 
  X_1 &= 0,\\
  Y_1 &= \frac{f_1u_t}{u_x^2}q_x - \frac{af_1}{u_x}q_t
        + \frac{(a - 1)f_1u_{xt}}{u_x^2}q,\\
  T_1 &= \frac{(a - 1)f_1\big((a - 1)u_yu_{xt} - au_xu_{yt}\big)}{u_x^2u_t}q
        + \frac{(a - 1)f_1u_y}{u_x^2}q_x;\\[3pt]
  X_2 &= -\frac{f_2u_{yt}}{u_y^2}q + \frac{f_2 u_t}{u_y^2}q_y,\\
  Y_2&= 0,\\
  T_2 &= \frac{f_2\big(au_xu_{yt} - (a - 1)u_yu_{xt}\big)}{u_y^2u_t}q
        + \frac{(a - 1)f_2}{u_y}q_x - \frac{af_2u_x}{u_y^2}q_y;\\[3pt]
  X_3 &= -\frac{af_3u_{yt}}{u_t^2}q + \frac{f_3}{u_t}q_y
        + \frac{(a - 1)f_3u_y}{u_t^2}q_t,\\
  Y_3 &= \frac{af_3u_{xt}}{u_t^2}q - \frac{af_3u_x}{u_t^2}q_t.\\
  T_3 &= 0.
\end{align*}
The expressions for the rest of conservation laws are so big that it makes no
sense to present them here. The only thing that is to be mentioned is that all
them are essentially $3$-component ones.

\section{Conservation laws of the cotangent equation in the case $A + B + C =
  0$}
\label{sec:cons-laws-cotang}

These laws are
\begin{align*}
  X_0 &= \frac{f_{0,u} (a - 1) u_y u_t - a f_0u_{yt}}{u_t} p - f_0p_y,\\
  Y_0 &= \frac{(a - 1)(f_0u_{xt} - f_{0, u}u_xu_t)}{u_t} p,\\
  T_0 &= \frac{f_0(au_xu_{yt} - (a - 1)u_yu_{xt})}{u_t^2}p
        - \frac{(a - 1)f_0u_y}{u_t}p_x + \frac{af_0u_x}{u_t}p_y;\\
  X_1 &= 0,\\
  Y_1 &= -\frac{f_1(au_xu_{tt} - (2a - 1)u_tu_{xt}) u_x}{u_t^2}p
        - f_1u_xp_x + \frac{af_1u_x^2}{u_t}p_t,\\
  T_1 &= \frac{f_1(a - 1)((2a  - 1)u_yu_{xt} - 2au_xu_{yt})u_x}{u_t^2}p
        - \frac{(a - 1)f_1u_yu_x}{u_t}p_x;\\
  X_2 &= -\frac{f_2u_yu_{yt}}{u_t}p - f_2u_yp_y,\\
  Y_2 &= 0,\\
  T_2 &= \frac{f_2u_y(au_xu_{yt} - (a - 1)u_yu_{xt})}{u_t^2}p
        - \frac{(a - 1)f_2u_y^2}{u_t}p_x +\frac{ af_2 u_yu_x}{u_t}p_y;\\
  X_3 &= \frac{(a - 1)f_3(u_yu_{tt} - 2u_tu_{yt})}{u_t}p - (a - 1)f_3u_yp_t,\\
  Y_3 &= \frac{f_3((2a - 1)u_tu_{xt} - au_xu_{tt})}{u_t}p + af_3u_xp_t
        - f_3u_tp_x,\\
  T_3 &= 0.
\end{align*}

\section{Cosymmetries of $\mathcal{E}$ and conservation laws of $\mathcal{T}\mathcal{E}$ for the cases $A + B +
  C\neq 0$}
\label{sec:cons-laws-mathc}

\subsection{$A =  C$, $B =  C$}
\label{sec:a-=-c-5}

Cosymmetries of $\mathcal{E}$:
\begin{gather*}
  \psi_{11} = \frac{f_{1, x}uu_t}{2u_x} + f_1 u_t,\qquad
\psi_{12} = \frac{f_1u_t}{u_x},\\
\psi_{21} = \frac{f_2 u_t}{u_y},\qquad
\psi_{22} = 2 f_2 u_t + \frac{f_{2, y} u u_t}{u_y},\\
\psi_3 = \frac{f_3}{u_t^2}.
\end{gather*}
Conservation law of $\mathcal{T}\mathcal{E}$:
\begin{align*}
  X_{11} &= 0,\\
  Y_{11} &= -\frac{f_1u_{xt}}{u_x}q + \frac{f_1u_t}{u_x}q_x - f_1q_t,\\
  T_{11} &= \frac{f_1(u_xu_{yt} + u_yu_{xt})}{u_xu_t}q -
           \frac{f_1u_y}{u_x}q_x;\\ 
  X_{12} &= 0,\\
  Y_{12} &= \frac{f_1(uu_xu_{xxt} - (u u_xx + u_x^2)u_{xt})}{2 u_x^2}q
           + \frac{f_1u_t(uu_{xx} + u_x^2)}{2u_x^2} q_x
           - \frac{f_1u_x}{2}q_t\\
         &+ \frac{f_1u}{2}q_{xt} - u u_t  \frac{f_1}{2u_x}q_{xx},\\
  T_{12} &= f_1\alpha q - \frac{f_1u_y(uu_{xx} + 2u_x^2)}{2u_x^2}q_x
           + \frac{f_1uu_y}{(2 u_x)} q_{xx},
           \intertext{where}
           \alpha & = -\frac{uu_yu_{xxt}}{2u_xu_t}
                    - \frac{uu_yu_{xtt}}{2u_t^2} - \frac{uu_xu_{ytt}}{2u_t^2}
                    + \frac{uu_yu_{xt}u_{xx}}{2u_tu_x^2}\\
         &+ \frac{(u_y(uu_{tt} + 2u_t^2) - 2uu_tu_{yt})u_{xt}}{2u_t^3}
           + \frac{u_x(uu_{tt} + 2u_t^2)u_{yt}}{2u_t^3};\\
  X_{21} &= -\frac{f_2 u_{yt}}{u_y}q + \frac{f_2  u_t}{u_y}q_y,\\
  Y_{21} &= 0,\\
  T_{21} &= \frac{f_2 (u_xu_{yt} + u_yu_{xt})}{u_tu_y}q - f_2 q_x -
           \frac{f_2 u_x}{u_y}q_y;\\
  X_{22} &= \big(\frac{f_2uu_{yyt}}{u_y} - \frac{f_2uu_{yt}u_{yy}}{u_y^2}
           + f_2u_{yt} + f_{2, y}u_tu_y\big)q
           + \frac{f_2u_t(uu_{yy} + u_y^2)}{u_y^2}q_y \\
         &+ (f_{2, y}u + f_2u_y)q_t - \frac{f_2uu_t}{u_y}q_{yy} + f_2uq_{yt};\\
  Y_{22} &= 0,\\
  T_{22} &= \beta q - (f_{2, y}u + 2f_2u_y) q_x
           - \frac{f_2 u_x (uu_{yy} + 2u_y^2)}{u_y^2}q_y +
           \frac{f_2uu_x}{u_y}q_{yy}, 
           \intertext{where}
           \beta & = -\frac{f_2uu_yu_{xtt}}{u_t^2} -
                   \frac{f_2uu_xu_{ytt}}{u_t^2}
                   - \frac{f_2uu_xu_{yyt}}{u_t u_y}
                   - \frac{f_2u(2u_tu_{yt} - u_yu_{tt})u_{xt}}{u_t^3}\\
         &+ \frac{f_2uu_xu_{yt}u_{yy}}{u_tu_y^2}
           +\frac{f_2uu_xu_{yt}u_{tt}}{u_t^3} - f_{2, y}u_x;\\
  X_3 &= \frac{f_3u_{yt}}{u_t^3}q - \frac{f_3u_y}{u_t^3}q_t,\\
  Y_3 &= -\frac{f_3u_{xt}}{u_t^3}q + \frac{f_3}{u_t^2}q_x -
        \frac{f_3u_x}{u_t^3}q_t,\\ 
  T_3 &= 0.
\end{align*}

\subsection{$A = -C$, $B = C$}
\label{sec:a-=-c-6}

Cosymmetries:
\begin{gather*}
  \psi_1 = \frac{f_1u_t}{u_x^3};\\
  \psi_{21} = \frac{f_2 u_t}{u_y},\quad
  \psi_{22} = 2f_2u_t + \frac{f_{2, y}u u_t}{u_y};\\
  \psi_{31} = f_3,\quad
  \psi_{32} = \frac{f_{3, t} u}{2} + f_3 u_t.
\end{gather*}
Conservation laws:
\begin{align*}
  X_1
  &= 0,\\
  Y_1
  &= \frac{f_1u_{xt}}{u_x^3}q + \frac{f_1u_t}{u_x^3}q_x -
    \frac{f_1}{u_x^2}q_t,\\ 
  T_1
  &= \frac{f_1(u_yu_{xt} - u_xu_{yt})}{u_x^3u_t}q + \frac{f_1u_y}{u_x^3}q_x;\\
  X_{21}
  &= -\frac{f_2u_{yt}}{u_y}q + \frac{f_2u_t}{u_y}q_y,\\
  Y_{21}
  &= 0,\\
  T_{21}
  &= \frac{f_2(u_xu_{yt} - u_yu_{xt})}{u_yu_t}q + f_2q_x -
    \frac{f_2u_x}{u_y}q_y;\\ 
  X_{22}
  &= \frac{-f_{2, y}u_tu_y^2 + f_2(uu_yu_{yyt} - uu_{yt}u_{yy} -
    3u_y^2u_{yt})}{u_y^2}q 
    + \frac{f_2u_t(uu_{yy} + u_y^2)}{u_y^2}q_y\\
  &- (f_2u_y + f_{2, y}u)q_t - \frac{f_2uu_t}{u_y}q_{yy} - f_2uq_{yt},\\
  Y_{22}
  &= 0,\\
  T_{22}
  &= \alpha q + (2f_2u_y + f_{2, y}u)q_x - \frac{f_2uu_xu_{yy}}{u_y^2}q_y
    + \frac{f_2u u_x}{u_y} q_{yy},
    \intertext{where}
    \alpha
  &= -\frac{f_2uu_yu_{xtt}}{u_t^2} - \frac{f_2uu_xu_{yyt}}{u_yu_t}
    + \frac{f_2uu_xu_{ytt}}{u_t^2}
    + \frac{f_2u_y(uu_{tt} - 2u_t^2) u_{xt}}{u_t^3}\\
  &+ \frac{f_2u_x(uu_t^2u_{yy} - uu_y^2u_{tt} +
    2u_t^2u_y^2)u_{yt}}{u_t^3u_y^2} 
    + f_{2, y}u_x;\\
  X_{31}
  &= -\frac{f_3u_{yt}}{u_t}q + \frac{f_3u_t}{u_y}q_t,\\
  Y_{31}
  &= \frac{f_3u_{xt}}{u_t}q + f_3q_x - \frac{f_3u_x}{u_t}q_t,\\
  T_{31}
  &= 0;\\
  X_{32}
  &= \frac{f_3uu_tu_{ytt} - f_3uu_{yt}u_{tt} - u_t^2(f_{3,t}u_y +
    3f_3u_{yt})}{2u_t^2}q  - \frac{f_{3, t}u + f_3u_t}{2}q_y\\
  &+ \frac{f_3uu_yu_{tt} + f_3u_t^2u_y}{2u_t^2}q_t
    - \frac{f_3 u}{2}q_{yt} -  \frac{f_3 uu_y}{2u_t}q_{tt},\\
  Y_{32}
  &= \frac{-f_3uu_tu_{xtt} + f_3uu_{xt}u_{tt} + u_t^2(f_{3, t}u_x +
    2f_3u_{xt})}{2u_t^2}q \\
  &+ \frac{f_{3, t}u + 2f_3u_t}{2}q_x - \frac{f_3uu_xu_{tt}}{2u_t^2}q_t +
    \frac{f_3uu_x}{2u_t}q_{tt},\\
  T_{32}
  &= 0.
\end{align*}

\subsection{$A = -C$, $2B = -3C$}
\label{sec:a-=-c-7}

Cosymmetries:
\begin{gather*}
  \psi_1 = \frac{f_1 u_t}{\sqrt{u_x}};\\
\psi_{21} = f_2 u_t + \frac{f_{2, y}uu_t}{2u_y},\quad
\psi_{22} = \frac{f_2 u_t}{u_y};\\
\psi_3 = f_3u_t^{-{5}/{2}}.
\end{gather*}
Conservation laws:
\begin{align*}
  X_1
  &= 0,\\
  Y_1
  &= -\frac{3f_1u_{xt}}{2\sqrt{u_x}}q + \frac{f_1 u_t}{\sqrt{u_x}} q_x
    - f_1 \sqrt{u_x} q_t,\\
  T_1
  &= \frac{3f_1(2u_xu_{yt} + 3u_yu_{xt})}{4\sqrt{u_x}u_t}q
    - \frac{3f_1u_y}{2\sqrt{u_x}}q_x;\\
  X_{21}
  &= \frac{f_2(2uu_yu_{yyt} - 2uu_{yt}u_{yy} + u_y^2u_{yt})}{4u_y^2}q\\
  &+ \frac{f_2u_t(2uu_{yy} - u_y^2)}{4u_y^2}q_y
    - \frac{f_2uu_t}{2u_y}q_{yy},\\
  Y_{21}
  &= 0,\\
  T_{21}
  &= f_2\alpha q + \frac{3f_2(uu_{yt} - u_tu_y)}{4u_t}q_x
    - \frac{f_2((4u_tu_x - 9uu_{xt})u_y^2 + 4uu_tu_xu_{yy})}{8u_y^2u_t}q_y\\
  &- \frac{3f_2u(2u_xu_{yt} + 3u_yu_{xt})}{8u_t^2}q_t
    + \frac{9f_2uu_y}{8u_t}q_{xt} + \frac{f_2uu_x}{2u_y}q_{yy}
    + \frac{3f_2uu_x}{4u_t}q_{yt},
    \intertext{where}
    \alpha
  &= -\frac{9uu_yu_{xtt}}{8u_t^2} - \frac{uu_xu_{yyt}}{2u_yu_t}
    - \frac{3uu_xu_{ytt}}{4u_t^2}
    + \frac{3u(2u_xu_{yt} + 3u_yu_{xt})u_{tt}}{8u_t^3}\\
  &+ \frac{uu_xu_{yt}u_{yy}}{2u_y^2u_t}
    - \frac{3(5uu_{yt} - 2u_tu_y)u_{xt}}{8u_t^2} + \frac{u_xu_{yt}}{2u_t};\\
  X_{22}
  &= -\frac{f_2u_{yt}}{u_y}q + \frac{f_2u_t}{u_y}q_y,\\
  Y_{22}
  &= 0,\\
  T_{22}
  &= \frac{f_2(2u_xu_{yt} + 3u_yu_{xt}}{2u_yu_t}q
    - \frac{3f_2}{2}q_x - \frac{f_2u_x}{u_y}q_y;\\
  X_3
  &= \frac{3f_3(u_{yt}q - u_yq_t)}{2u_t^{\frac{7}{2}}},\\
  Y_3
  &= -\frac{f_3 (3u_{xt}q + 2u_xq_t - 2u_tq_x)}{2u_t^{\frac{7}{2}}},\\
  T_3 &= 0.
\end{align*}

\subsection{$A = -C$, $B = -3C$}
\label{sec:a-=-c-8}

Cosymmetries:
\begin{gather*}
  \psi_1 = f_1 u_t u_x;\\
  \psi_{21} = f_2 u_t + \frac{f_{2, y} u u_t}{2u_y},\quad
  \psi_{22} = \frac{f_2 u_t}{u_y};\\
  \psi_3 = \frac{f_3}{u_t^4}.
\end{gather*}
Conservation laws:
\begin{align*}
  X_1
  &= 0,\\
  Y_1
  &= -3f_1u_xu_{xt}q - f_1u_x^2q_t + f_1u_tu_xq_x,\\
  T_1
  &= \frac{3f_1u_x(u_xu_{yt} + 3u_yu_{xt})}{u_t}q - 3f_1u_xu_yq_x;\\
  X_{21}
  &= \frac{f_2((2u_y^2 - uu_{yy}) u_{yt} + uu_yu_{yyt})}{2u_y^2}q
    + \frac{f_2u_t(uu_{yy} - 2u_y^2)}{2u_y^2}q_y
    - \frac{f_2uu_t}{2u_y}q_{yy},\\
  Y_{21}
  &= 0,\\
  T_{21}
  &= f_2 \alpha q + \frac{3f_2(uu_{yt} - u_tu_y)}{2u_t}q_x
    - \frac{f_2((u_tu_x - 9uu_{xt})u_y^2 + uu_tu_xu_{yy})}{2u_y^2u_t}q_y\\
  &- \frac{3f_2u(u_xu_{yt} + 3u_yu_{xt})}{2u_t^2}q_t
    + \frac{9f_2uu_y}{2u_t}q_{xt} + \frac{f_2uu_x}{2u_y}q_{yy}
    + \frac{3f_2uu_x}{2u_t}q_{yt},
    \intertext{where}
    \alpha
  &= -\frac{9uu_yu_{xtt}}{2u_t^2} - \frac{uu_xu_{yyt}}{2u_yu_t}
    - \frac{3uu_xu_{ytt}}{2u_t^2}
    + \frac{3u(u_xu_{yt} + 3u_yu_{xt}) u_{tt}}{2u_t^3}\\
  &+ \frac{uu_xu_{yt}u_{yy}}{2u_y^2u_t}
    - \frac{3(4uu_{yt} - u_tu_y)u_{xt}}{2u_t^2} + \frac{u_xu_{yt}}{2u_t};\\
  X_{22}
  &= -\frac{f_2u_{yt}}{u_y}q + \frac{f_2u_t}{u_y}q_y,\\
  Y_{22}
  &= 0,\\
  T_{22}
  &= \frac{f_2(u_xu_{yt} + 3u_yu_{xt})}{u_yu_t}q - 3f_2q_x
    - \frac{f_2 u_x}{u_y}q_y;\\
  X_3
  &= \frac{3f_3(u_{yt}q - u_yq_t)}{u_t^5},\\
  Y_3 &= -\frac{f_3(3u_{xt}q + u_xq_t - u_tq_x)}{u_t^5},\\
  T_3 &= 0.
\end{align*}

\subsection{$A = -C$, $2B = 3C$}
\label{sec:a-=-c-9}

Cosymmetries:
\begin{gather*}
  \psi_1 = \frac{f_1u_t}{u_x^{7/2}};\\
  \psi_{21} = f_2u_t + \frac{f_{2, y}uu_t}{2u_y},\quad
  \psi_{22} = \frac{f_2u_t}{u_y};\\
  \psi_3 = f_3\sqrt{u_t}.
\end{gather*}

Conservation laws:
\begin{align*}
  X_1
  &= 0,\\
  Y_1
  &= \frac{f_1(3u_{xt}q - 2u_xq_t + 2u_tq_x)}{2u_x^{{7}/{2}}},\\
  T_1
  &= -\frac{3f_1(2u_xu_{yt}q - 3u_yu_{xt}q - 2u_tu_yq_x)}{4u_x^{{7}/{2}}u_t};\\
  X_{21}
  &= \frac{f_2(2uu_yu_{yyt} - (2uu_{yy} + 5u_y^2)u_{yt})}{4u_y^2}q
    + \frac{f_2u_t(2uu_{yy} + 5u_y^2)}{4u_y^2}q_y\\
  &- \frac{f_2uu_t}{2u_y}q_{yy},\\
  Y_{21}
  &= 0,\\
  T_{21}
  &= f_2 \alpha q - \frac{3f_2(uu_{yt} - u_tu_y)}{4u_t}q_x
    - \frac{f_2((4u_tu_x - 9uu_{xt})u_y^2 + 4uu_tu_xu_{yy})}{8u_tu_y^2}q_y\\
  &+ \frac{3f_2u(2u_xu_{yt} - 3u_yu_{xt})}{8u_t^2}q_t
    + \frac{f_2uu_x}{2u_y}q_{yy} - \frac{3f_2uu_x}{4u_t}q_{yt}
    + \frac{9f_2uu_y}{8u_t}q_{xt}
\end{align*}

\subsection{$A = -C$, $B = 3C$}
\label{sec:a-=-c-10}

Cosymmetries:
\begin{gather*}
  \psi_1 = \frac{f_1u_t}{u_x^5};\\
  \psi_{21} = f_2u_t +\frac{f_{2, y}uu_t}{2u_y},\quad
  \psi_{22} = f_2\frac{u_t}{u_y};\\
  \psi_3 = f_3u_t^2.
\end{gather*}

Conservation laws:
\begin{align*}
  X_1
  &= 0,\\
  Y_1
  &= \frac{f_1(3u_{xt} q - u_xq_t + u_tq_x)}{u_x^5},\\
  T_1
  &= -\frac{3f_1(u_xu_{yt} - 3u_yu_{xt})}{u_x^5u_t}q +
    \frac{3f_1u_y}{u_x^5}q_x;\\ 
  X_{21}
  &= \frac{f_2(uu_yu_{yyt}u - (uu_{yy} + 4u_y^2)u_{yt})}{2u_y^2}q
    + \frac{f_2u_t(uu_{yy} + 4u_y^2)}{2u_y^2}q_y
    - \frac{f_2uu_t}{2u_y}q_{yy},\\
  Y_{21} &= 0,\\
  T_{21}
  &= f_2\alpha q - \frac{3f_2(uu_{yt} - u_tu_y)}{2u_t}q_x
    - \frac{f_2 ((u_tu_x - 9uu_{xt})u_y^2 + uu_tu_xu_{yy})}{2u_y^2u_t}q_y\\
  &+ \frac{3f_2u(u_xu_{yt} - 3u_yu_{xt})}{2u_t^2}q_t
    + \frac{9f_2uu_y}{2u_t}q_{xt} + \frac{f_2uu_x}{2u_y}q_{yy}
    - \frac{3uf_2u_x}{2u_t}q_{yt},
    \intertext{where}
    \alpha
  &= -\frac{9uu_yu_{xtt}}{2u_t^2} - \frac{uu_xu_{yyt}}{2u_yu_t}
    + \frac{3uu_xu_{ytt}}{2u_t^2}
    - \frac{3u(u_xu_{yt} - 3u_yu_{xt})u_{tt}}{2u_t^3}\\
  &- \frac{3(2uu_{yt} + u_tu_y)u_{xt}}{2u_t^2}
    + \frac{u_x(uu_{yy} + u_y^2)u_{yt}}{2u_y^2u_t};\\
  X_{22}
  &= -\frac{f_2u_{yt}}{u_y}q + \frac{f_2 u_t}{u_y}q_y,\\
  Y_{22}
  &= 0,\\
  T_{22}
  &= \frac{f_2(u_xu_{yt} - 3u_yu_{xt})}{u_yu_t}q
    + 3f_2q_x - \frac{f_2 u_x}{u_y}q_y;\\
  X_3
  &= -3f_3u_tu_{yt}q + 3f_3u_tu_yq_t,\\
  Y_3
  &= 3f_3u_tu_{xt}q - f_3u_tu_xq_t + f_3u_t^2q_x\\
  T_3
  &= 0.
\end{align*}

\subsection{$3A = 2C$, $3B = -2C$}
\label{sec:3a-=-2c-1}

Cosymmetries:
\begin{gather*}
 \psi_1 = \frac{f_1u_t}{\sqrt{u_x}};\\
 \psi_2 = \frac{f_2 u_t}{u_y^{{7}/{2}}};\\
 \psi_{31} = f_3,\quad
 \psi_{32} = \frac{f_{3, t} u}{2} + f_3u_t.
\end{gather*}
Conservation laws:
\begin{align*}
  X_1
  &= 0,\\
  Y_1
  &= -\frac{2f_1u_{xt}}{3\sqrt{u_x}}q + \frac{f_1u_t}{\sqrt{u_x}}q_x
    + \frac{2f_1\sqrt{u_x}}{3}q_t,\\
  T_1
  &= -\frac{4f_1(u_xu_{yt} - u_yu_{xt})}{9\sqrt{u_x}u_t}q
    - \frac{2f_1u_y}{3\sqrt{u_x}}q_x;\\
  X_2
  &= -\frac{f_2u_{yt}}{u_y^{{7}/{2}}}q + \frac{f_2u_t}{u_y^{{7}/{2}}}q_y,\\
  Y_2
  &= 0,\\
  T_2
  &= -\frac{2f_2(u_xu_{yt} - u_yu_{xt})}{3u_y^{{7}/{2}}u_t}q
    - \frac{2f_2}{3u_y^{{5}/{2}}}q_x
    + \frac{2f_2u_x}{3u_y^{{7}/{2}}}q_y;\\
  X_{31}
  &= \frac{2f_3u_{yt}}{3u_t}q - \frac{2f_3u_y}{3u_t}q_t,\\
  Y_{31}
  &= -\frac{2f_3u_{xt}}{3u_t}q + f_3q_x + \frac{2f_3 u_x}{3u_t}q_t,\\
  T_{31}
  &= 0;\\
  X_{32}
  &= -\frac{f_3(4u_t^2 - 2uu_{tt})u_{yt} + 3u_t^2f_{3,t}u_y + 2 u_ytt u_t f_3
    u}{6u_t^2}q\\
  &- \frac{f_{3,t} u + f_3u_t}{2}q_y
    - \frac{f_3u_y(uu_{tt} + u_t^2)}{3u_t^2}q_t
    - \frac{f_3 u}{2}q_{yt} + \frac{f_3uu_y}{3u_t}q_{tt},\\
  Y_{32}
  &= \frac{f_3(u_t^2 - 2uu_{tt})u_{xt} + 3f_{3,t}u_t^2u_x  +
    2f_3uu_tu_{xtt}}{6u_t^2}q 
    + \frac{f_3u_x(2uu_{tt} + 5u_t^2)}{6u_t^2}q_t\\
  &+ \frac{f_{3, t}u + 2f_3u_t}{2}q_x - \frac{f_3 uu_x}{3u_t}q_{tt},\\
  T_{32}
  &= 0.
\end{align*}

\subsection{$3A = C$, $3B = -C$}
\label{sec:3a-=-c-1}

Cosymmetries:
\begin{gather*}
  \psi_1 = f_1u_xu_t,\\
  \psi_2 = \frac{f_2u_t}{u_y^5};\\
  \psi_{31} = f_3,\quad
  \psi_{32} = \frac{f_{3, t}u}{2} + f_3 u_t.
\end{gather*}
Conservation laws:
\begin{align*}
  X_1
  &= 0,\\
  Y_1
  &= -\frac{f_1u_xu_{xt}}{3}q + f_1u_xu_tq_x + \frac{f_1u_x^2}{3}q_t,\\
  T_1
  &= -\frac{f_1u_x(u_xu_{yt} - u_yu_{xt})}{9u_t}q - \frac{f_1u_xu_y}{3}q_x;\\
  X_2
  &= -\frac{f_2u_{yt}}{u_y^5}q + \frac{f_2u_t}{u_y^5}q_y,\\
  Y_2
  &= 0,\\
  T_2
  &=  -\frac{f_2(u_xu_{yt} - u_yu_{xt})}{3u_y^5u_t}q
    - \frac{f_2}{3u_y^4}q_x + \frac{f_2u_x}{3u_y^5}q_y;\\
  X_{31}
  &= \frac{f_3u_{yt}}{3u_t}q - \frac{f_3u_y}{3u_t}q_t,\\
  Y_{31}
  &= -\frac{f_3u_{xt}}{3u_t}q + f_3q_x + \frac{f_3u_x}{3u_t}q_t,\\
  T_{31}
  &=0;\\
  X_{32}
  &= \frac{f_3(uu_{tt} - 5u_t^2)u_{yt} - u_t(3f_{3,t}u_tu_y + f_3uu_{ytt})}{6
    u_t^2}q 
    - \frac{f_3u_t + f_{3,t} u}{2}q_y\\
  &- \frac{f_3u_y(uu_{tt} + u_t^2)}{6u_t^2}q_t - \frac{f_3u}{2}q_{yt}
    + \frac{f_3uu_y}{6u_t}q_{tt},\\
    Y_{32}
  &= \frac{f_3(2u_t^2 - uu_{tt})u_{xt} + u_t(3f_{3,t}u_tu_x +
    f_3uu_{xtt})}{6u_t^2}q 
    + \frac{f_3u_x(uu_{tt} + 4u_t^2)}{6u_t^2}q_t\\
  &+ \frac{f_{3,t} u + 2f_3u_t}{2}q_x - \frac{f_3uu_x}{6u_t}q_{tt},\\
    T_{32}
  &= 0.
\end{align*}

\subsection{$A = C$, $2B = -C$}
\label{sec:a-=-c-11}

Cosymmetries:
\begin{gather*}
  \psi_0 = u_t,\\
  \psi_1 = \frac{f_1u_t}{\sqrt{u_x}},\\
  \psi_2 = \frac{f_2u_t}{u_y^5},\\
  \psi_3 = \frac{f_3}{\sqrt{u_t}}.
\end{gather*}
Conservation laws:
\begin{align*}
  X_0
  &= -\frac{u_y}{2}q_t,\\
  Y_0
  &= -\frac{u_{xt}}{2}q + u_xq_t + u_tq_x,\\
  T_0
  &= -\frac{2u_xu_{yt} - u_yu_{xt}}{4u_t}q;\\
  X_1
  &= 0,\\
  Y_1
  &= f_1\left(-\frac{u_xt}{2\sqrt{u_x}}q + \frac{u_t}{\sqrt{u_x}}q_x
    + \sqrt{u_x} q_t\right),\\
  T_1
  &= -\frac{f_1(2u_xu_{yt} - u_yu_{xt})}{4\sqrt{u_x}u_t}q
    - \frac{f_1u_y}{2\sqrt{u_x}}q_x;\\
  X_2
  &= \frac{f_2u_{yt}}{u_y^5}q + \frac{f_2u_t}{u_y^5}q_y
    - \frac{f_2}{2u_y^4}q_t,\\
  Y_2 &= 0,\\
  T_2
  &= \frac{f_2(2u_xu_{yt} - u_yu_{xt}u_y)}{2u_y^5u_t}q
    + \frac{f_2u_x}{u_y^5}q_y;\\
  X_3
  &= -f_3\left(\frac{u_y}{2\sqrt{u_t}}q_t - \frac{u_{yt}}{\sqrt{u_t}}q
    - \sqrt{u_t}q_y\right),\\
  Y_3
  &= -\frac{f_3u_{xt}}{\sqrt{u_t}}q + \frac{f_3u_x}{\sqrt{u_t}}q_t,\\
  T_3
  &= 0.
\end{align*}

\subsection{$A = -2C$, $B = -2C$}
\label{sec:a-=-2c-1}

Cosymmetries:
\begin{gather*}
  \psi_0 = u_t;\\
  \psi_1 = \frac{f_1u_t}{\sqrt{u_x}};\\
  \psi_2 = \frac{f_2u_t}{\sqrt{u_y}};\\
  \psi_3 = \frac{f_3}{u_t^4}
\end{gather*}
Conservation laws:
\begin{align*}
  X_0
  &= -3u_{yt}q - 2u_yq_t,\\
  Y_0 &= -2u_{xt}q - 2u_xq_t + u_tq_x,\\
  T_0
  &= \frac{10(u_xu_{yt} + u_yu_{xt})}{u_t}q;\\
  X_1
  &= 0,\\
  Y_1
  &= -\frac{2f_1u_{xt}}{\sqrt{u_x}}q + \frac{f_1u_t}{\sqrt{u_x}}q_x
    - 2f_1q_t\sqrt{u_x},\\
  T_1 &= \frac{4f_1(u_xu_{yt} + u_yu_{xt})}{\sqrt{u_x}u_t}q
        -\frac{2f_1u_y}{\sqrt{u_x}}q_x;\\
  X_2
  &= -\frac{f_2u_{yt}}{\sqrt{u_y}}q + \frac{f_2u_t}{\sqrt{u_y}}q_y,\\
  Y_2
  &= 0,\\
  T_2
  &= \frac{2f_2(u_xu_{yt} + u_yu_{xt})}{\sqrt{u_y}u_t}q
    - 2f_2\sqrt{u_y} q_x - \frac{2f_2u_x}{\sqrt{u_y}}q_y;\\
  X_3
  &= \frac{2f_3u_{yt}}{u_t^5}q - \frac{2f_3u_y}{u_t^5}q_t,\\
  Y_3 &= -\frac{f_3(2u_{xt}q + 2u_xq_t - u_tq_x)}{u_t^5},\\
  T_3
  &= 0.
\end{align*}

\end{document}